\newcommandx{\unsure}[2][1=]{\todo[linecolor=blue,backgroundcolor=blue!25,bordercolor=blue,#1]{#2}}
\newcommand*\pFq[6][8]{%
	\begingroup 
	\pFqmuskip=#1mu\relax
	\mathcode`\.=\string"8000
	\begingroup\lccode`\~=`\,
	\lowercase{\endgroup\let~}\pFqcomma
	{}_{\,#2}F_{\,#3}{\left[\genfrac..{0pt}{}{\,#4}{\,#5};#6\right]}%
	\endgroup
}
\newcommand{\pFqcomma}{\mskip\pFqmuskip}
\newtheorem{theorem}{Theorem}[section]
\newtheorem{lemma}{Lemma}[section]
\newtheorem{definition}{Definition}[section]
\newcommand{\bbint}[2]{\ensuremath{\;\backslash\!\!\!\!\backslash\!\!\!\!\!\int_{#1}^{#2}}}
\begin{document}
\title[]{Exact Evaluation and resummation of the divergent expansion for the  Heisenberg-Euler Lagrangian}
\author{Christian D. Tica$^{1,2}$}
\author{Philip Jordan D. Blancas$^{3,4}$}
\author{Eric A. Galapon$^3$}
\address{$^1$ SISSA-Scuola Internazionale Superiore di Studi Avanzati, Via Bonomea 265, 34136 Trieste, Italy}
\address{$^2$ ICTP-International Centre for Theoretical Physics, Strada Costiera 11, 34151 Trieste, Italy}
\address{$^3$Theoretical Physics Group, National Institute of Physics, University of the Philippines, Diliman Quezon City, 1101 Philippines}
\address{$^4$Department of Physics, Ateneo de Manila University, Loyola Heights, Quezon City, 1108 Philippines}
\email{eagalapon@up.edu.ph}

\date{\today}

\maketitle
\begin{abstract}
    We devise a novel resummation prescription based on the method of  finite-part integration [Galapon E.A Proc.R.Soc A 473, 20160567(2017)] to perform a constrained extrapolation of the divergent weak-field perturbative expansion for the Heisenberg-Euler Lagrangian to the nonperturbative strong magnetic and electric field regimes. In the latter case, the prescription allowed us to reconstruct the nonperturbative imaginary part from a finite collection of the real expansion coefficients. We also demonstrate the utility of the various equivalent representations of the Hadamard's finite part in deriving the exact closed form for the Heisenberg-Euler Lagrangian from the nonperturbative integral representation. 
\end{abstract}

\section{Introduction}
Employing perturbation theory (PT) for approximating some function or a finite physical observable, $F(\beta)$, around some small real perturbing parameter, $\beta$, often leads to a divergent power series expansion with an alternating sign character,
\begin{equation}\label{mopy}
     F(\beta) \sim  \sum_{n=0}^{\infty}a_n(-\beta)^{n}, \qquad \beta \to 0^+, \qquad a_n > 0,
\end{equation}
which represents the exact solution only in the asymptotic sense \cite{dingle, wong2, olver1997asymptotics}. An optimally truncated asymptotic power series expansion rapidly delivers experimentally relevant approximations to $F(\beta)$ near the perturbative regime $\beta\to 0^+$. Various summability methods and extrapolation procedures \cite{bender1999advanced, jen} such as Borel resummation and Pad\'e approximants are then required to combine the information contained within a finite string of the coefficients, $a_n$, with the known analytic properties of $F(\beta)$ to piece together a reliable reconstruction of the exact solution in nonperturbative regions, $\beta\to 0^-$ or $\beta\to\pm\infty$, along the real line. 
  
Complications arise  when the exact solution possesses elements that are invisible to any finite order of perturbation theory. Borel analysis of the PT series in this case indicate that nonperturbative ambiguities relate to the factorial divergence and a non-alternating sign character of the terms in the expansion \eqref{mopy} for the case $\beta = -\kappa, \kappa > 0$ \cite{le2012large, marucho, fisher}. In particular, nonperturbative contributions arise from the poles of the Borel transform especially those situated along the real axis of the complex Borel plane which render the Borel-Laplace integral ill-defined in the conventional sense \cite{caliceti, calicet_odd}. The systematic treatment and the recovery of nonperturbative ambiguities arising from Borel nonsummability of PT series expansions is a subject in the theory of resurgence \cite{costin2019resurgent, dorigoni, unsal}.

In physics, paradigmatic examples of nonperturbative effects occur in singular eigenvalue problems in single-particle quantum mechanics \cite{le2012large,Jents, stark, instant, anharm, ptsym, yasuta, graffi}. Perturbing potentials such as in the quartic anharmonic oscillator with a negative coupling, the cubic oscillator, and the Stark Hamiltonian, impart an exponentially suppressed imaginary part to the perturbed energy which characterizes the decay of the resulting metastable states via quantum tunneling process. Unlike the real part of the energy, the imaginary part cannot be recovered from a straightforward summation of the nonalternating divergent weak-coupling PT expansion.

The standard approach for carrying out the resummation of divergent nonalternating PT expansions is the Pad\'e-Borel technique \cite{graffi, florio, jent, byorkin, stark2, borell}. 
Pad\'e approximants are used to perform the analytic continuation of the Borel transform of \eqref{mopy} to a neighborhood of the positive real axis.
They are also used to simulate the singularity structure of the Borel transform such as the location of the poles and generate their contributions. Other functions such as the Gauss hypergeometric function, $_2F_1$, and the Meijer G-function are also used in place of the Pad\'e approximants to accommodate more complicated singularity structures such as the presence of branch cuts \cite{mera2018fast, meraprl}. 

None of these prescriptions however allow for the incorporation of the known specific leading-order behavior of the exact solution in the opposite regime, $F\left(\beta{\to\pm\infty}\right)$, along the real line, which is the key for an extrapolation procedure along this direction \cite{weniger1996construction, Tica_royal, wellen, wellen2,suslov2001summing, le1990hydrogen}. In \cite{Tica_royal}, we demonstrated a resummation scheme that allowed us extrapolate alternating divergent perturbative expansions to the nonperturbative region $\beta\to\infty$ along the real line by incorporating the known non-integer algebraic-power leading order behavior of the exact solution, $F(\beta\to\infty) = \beta^{\nu},\,0<\nu<1$. Here, we formulate the resummation prescription to accommodate the case of a logarithmic leading-order non-pertubative behavior as $\beta\to\infty$. A well-known example is the hydrogen atom in a constant magnetic field of magnitude $B$ \cite{le1990hydrogen, hydro, hydro2} where the logarithmic behavior of the binding energy $\mathcal{E}(B)\sim\frac{1}{2} (\ln B)^2$ in the nonperturbative regime $B\to\infty$ becomes essential in large magnetic fields beyond the neutron stars range. This peculiar strong-field behavior renders the resummation of the divergent PT expansion for the energy eigenvalues difficult to perform \cite{cizik}. 

More generally, here we exploit the perturbative data contained in a finite collection of 
coefficients, $a_n>0$, of the expansion \eqref{mopy} to perform a constrained extrapolation to other nonperturbative regions $\beta\to\pm\infty$ and $\beta\to 0^{-}$ along the real line. More importantly, the extrapolation into the region where $\beta=-\kappa, \kappa > 0$, allows us to recover contributions that are undetectable to any finite-order of perturbation theory.   

This paper is organized as follows. In section \ref{sorat} we give a concise discussion of the important results on the various novel formulations of the Hadamard's finite part and the equivalence among its representations resulting from these independent formulations. This equivalence will play a central role in the development of the resummation prescription in the succeeding sections. Section \ref{bigih} will see the first application of these results when we derive the closed forms for both the real and complex Heisenberg-Euler Lagrangian in quantum electrodynamics (QED) by evaluating their respective exact nonperturbative integral representation using the method of finite-part integration devised in \cite{galapon2}.  

We then demonstrate the resummation and extrapolation procedure in section \ref{bigaj} by taking on the divergent weak-field expansion for the Heisenberg-Euler Lagrangian in both purely magnetic and purely electric field background. In the latter, the Heisenberg-Euler lagrangian is complex and the terms in the PT expansion is nonalternating in sign. The prescription in this case allows us to reconstruct the nonperturbative imaginary part which represents the Schwinger effect, a well-known example of nonperturbative effect in QED \cite{schwinger} which characterizes the instability of the quantum vacuum. Finally in section \ref{conclusion}, we summarize our results and provide a possible direction in which to improve the efficacy of the resummation and extrapolation prescription. 

\section{Hadamard's Finite Part}\label{sorat}
In this section, we give a concise discussion on the computation of Hadamard's finite part of divergent integrals with a pole singularity at the origin,
 \begin{equation}\label{miv}
    \int_{0}^{a} \frac{f(x)}{x^{m}} \mathrm{d}x,\qquad m=1,2,\dots, a>0,
\end{equation}
where $f(x)$ is analytic at the origin and $f(0)\neq 0$. Consistent with the expedient canonical definition \cite{monegato2009definitions},  the Hadamard's finite part may be formulated more rigorously as a complex contour integral \cite{galapon2,galapon2016cauchy} or alternatively as a regularized limit at the poles of Mellin transform integrals \cite{regularizedlimit}. The equivalence of these dual representations is central to the results we 
 present here and their applications. 
 
\subsection{Canonical Representation}
The canonical representation of the finite part of the divergent integral \eqref{miv} is obtained by introducing an arbitrarily small cut-off parameter $\epsilon$, $0<\epsilon<a$, to replace the offending non-integrable origin. The resulting convergent integral is grouped into two sets of terms
\begin{equation}\label{definitepart}
\int_{\epsilon}^{a} \frac{f(x)}{x^{m}} \mathrm{d}x = C_{\epsilon}+D_{\epsilon},
\end{equation}
where $C_{\epsilon}$ is the group of terms that possesses a finite limit as $\epsilon\rightarrow 0$, while $D_{\epsilon}$ diverges in the same limit and consists of terms in algebraic powers of $\epsilon$ and $\ln\epsilon$. The finite part of the divergent integral is then defined uniquely by dropping the diverging group of terms $D_{\epsilon}$, leaving only the limit of $C_{\epsilon}$ and assigning the limit as the value of the divergent integral,
\begin{align}\label{finitepart}
\bbint{0}{a} \frac{f(x)}{x^{m}}\mathrm{d} x = \lim_{\epsilon\rightarrow 0} C_{\epsilon} .
\end{align}
The upper limit $a$ can be also be taken to infinity provided  $f(x)x^{-m}$ is integrable at infinity, in which case,
\begin{align}\label{pisik}
    \bbint{0}{\infty}\frac{f(x)}{x^{m}}\mathrm{d}x =     \lim_{a\to\infty}\bbint{0}{a}\frac{f(x)}{x^{m}}\mathrm{d}x.
\end{align}

\subsection{Contour Integral Representation}
A rigorous formulation of the Hadamard's finite part as a complex contour integral may also be derived from the following form equivalent to equation \eqref{finitepart}, 
\begin{align}\label{form2}
\bbint{0}{a} \frac{f(x)}{x^{m}} \mathrm{d}x = \lim_{\epsilon\rightarrow 0} \left[\int_{\epsilon}^{a} \frac{f(x)}{x^{m}} \mathrm{d}x - D_{\epsilon}\right].
\end{align}
This contour integral representation is given in the following lemma.  The full derivation is given as a proof of Theorem 2.2 in \cite{galapon2}.
\begin{lemma}\label{prop1}
Let the complex extension, $f(z)$, of $f(x)$,  be analytic in the interval $[0,a]$. If $f(0)\neq 0$, then 
	\begin{equation}\label{result1}
	\bbint{0}{a}\frac{f(x)}{x^{m}}\mathrm{d}x=\frac{1}{2\pi i}\int_{\mathrm{C}} \frac{f(z)}{z^{m}} \left(\log z-\pi i\right)\mathrm{d}z, \;\; m = 1, 2 \dots 
	\end{equation}
where $\log z$ is the complex logarithm whose branch cut is the positive real axis and $\mathrm{C}$ is the contour straddling the branch cut of $\log z$ starting from $a$ and ending at $a$ itself, as depicted in figure \ref{tear2}. The contour $\mathrm{C}$ does not enclose any pole of $f(z)$.
\end{lemma}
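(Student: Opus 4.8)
The plan is to start from the regularized form \eqref{form2} and explicitly construct the divergent subtraction term $D_\epsilon$ in a way that is recognizable as a contour integral remnant. First I would expand $f(z)$ in its Taylor series about the origin, $f(z) = \sum_{k=0}^{m-1} c_k z^k + z^m g(z)$, where $g$ is analytic on $[0,a]$, so that the singular part of the integrand is $\sum_{k=0}^{m-1} c_k z^{k-m}$. Integrating this from $\epsilon$ to $a$ produces precisely the algebraic-power terms $\epsilon^{k-m}$ for $k \neq m-1$ together with a single logarithmic term $-c_{m-1}\ln\epsilon$; these constitute $D_\epsilon$. The remaining piece $\int_\epsilon^a g(x)\,\mathrm{d}x$ together with the finite endpoint contributions at $a$ is $C_\epsilon$, and taking $\epsilon\to 0$ gives a concrete expression for the finite part in terms of the Taylor coefficients and $\int_0^a g$.

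Next I would evaluate the right-hand side of \eqref{result1} and show it reduces to the same expression. The contour $\mathrm{C}$ consists of a segment just above the positive real axis running from $a$ inward toward $0$, a small circle of radius $\epsilon$ around the origin, and a segment just below the axis running from $0$ back out to $a$. On the upper edge $\log z = \ln x$, on the lower edge $\log z = \ln x + 2\pi i$ (with the branch cut along the positive reals), so the combination $\log z - \pi i$ takes the values $\ln x - \pi i$ above and $\ln x + \pi i$ below. The key cancellation is that the difference of the two straight-line contributions is governed by $(\ln x - \pi i) - (\ln x + \pi i) = -2\pi i$, which exactly cancels the $1/(2\pi i)$ prefactor and, after accounting for the opposing orientations of the two segments, reproduces $\int_\epsilon^a f(x)/x^m\,\mathrm{d}x$ up to the small-circle contribution. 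One then shows that the small-circle integral, in the limit $\epsilon\to 0$, contributes exactly $-D_\epsilon$: the $z^{k-m}$ terms with $k\neq m-1$ give the algebraic powers of $\epsilon$ (the $\log z$ factor on the circle stays bounded against the vanishing or, for negative powers, is dominated), while the $z^{-1}$ term interacting with the $\log z$ factor produces the $-c_{m-1}\ln\epsilon$ piece via $\oint z^{-1}\log z\,\mathrm{d}z$ around the keyhole.

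I expect the main obstacle to be the careful bookkeeping on the small circle: one must track both the $\log z$ and the $-\pi i$ pieces against each monomial $z^{k-m}$, being attentive to which terms survive as $\epsilon\to 0$, which vanish, and which generate the divergent $D_\epsilon$ that is being subtracted off. In particular the logarithmically divergent term requires computing $\int$ of $z^{-1}(\log z - \pi i)$ over the circular arc parametrized as $z = \epsilon e^{i\theta}$, $\theta$ running from $2\pi$ down to $0$ (or the appropriate range matching the keyhole orientation), where $\log z = \ln\epsilon + i\theta$; this yields the $\ln\epsilon$ contribution with the right sign and the phase-space factor that matches $D_\epsilon$. Once the straight-line parts are seen to give $\int_\epsilon^a f/x^m\,\mathrm{d}x$ and the circular part is seen to give $-D_\epsilon$ plus terms vanishing with $\epsilon$, passing to the limit and invoking \eqref{form2} completes the proof. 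The hypothesis $f(0)\neq 0$ is what makes the $m$-th order pole genuine and fixes $c_0 \neq 0$, while analyticity on $[0,a]$ guarantees the Taylor expansion converges on a neighborhood of the contour and that $\mathrm{C}$ encloses no other singularity, so deforming between the keyhole and the limiting configuration is legitimate.
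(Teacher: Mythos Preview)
Your proposal is correct and follows essentially the same route the paper indicates: the paper does not prove the lemma in situ but points to \eqref{form2} as the starting point and cites the full derivation in \cite{galapon2}, which is precisely the keyhole computation you outline. One small inconsistency: you write that the $z^{-1}$ term on the small circle ``produces the $-c_{m-1}\ln\epsilon$ piece,'' but since you have already (correctly) asserted that the circle contributes $-D_\epsilon$ and $D_\epsilon$ contains $-c_{m-1}\ln\epsilon$, the circle must produce $+c_{m-1}\ln\epsilon$; when you carry out the explicit $\theta$-integral you will see that the constant part $\ln\epsilon-\pi i$ annihilates against $\oint e^{in\theta}\,\mathrm{d}\theta$ for $n\neq 0$, while for $n=0$ the $-\pi i$ shift exactly cancels the average of $i\theta$, leaving $c_{m-1}\ln\epsilon$ with the correct sign.
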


\subsection{Regularized Limit Representation} \label{bilok}
Let $w(z)$ be a function of the complex variable $z$ that is analytic at some domain $D \subseteq \mathbb{C}$ and  $z_0$ be an isolated singularity of $w(z)$ in $D$, then we can define the deleted neighborhood $\delta_{z_0} = D(r,z_0)\setminus z_0$  where $D(r,z_0)$ is an open disk of radius $r$ centered at $z_0$ such that the function $w(z)$ admits the Laurent series expansion 
 \begin{align}\label{gisit}
     w(z) = \sum_{n=-\infty}^{\infty}a_n(z-z_0)^n,
 \end{align}
 where the coefficients $a_n$ are given by 
 \begin{align}
     a_n = \frac{1}{2\pi i} \oint_{ |z-z_0| = r'} \frac{w(z)}{(z-z_0)^{n+1}}\mathrm{d}z
 \end{align}
 for any $r' < r$.  The radius $r$ is bounded by the distance of the nearest singularity of $w(z)$ from $z_0$. The regularized limit of the function $w(z)$ at $z=z_0$ is defined as follows.

\begin{definition}
Let $z_0$ be an interior point in the domain D of $w(z)$. The regularized limit of $w(z)$ as $z\to z_0$, denoted by
\begin{equation}
        \lim^{\times}_{z\to z_0} w(z),
\end{equation}
is the coefficient $a_0$ in the Laurent series expansion \eqref{gisit} of $w(z)$ in a deleted neighborhood of $z_0$.
\end{definition}
In the case when the function $w(z)$ can be rationalized, that is written in the form $w(z) = h(z)/g(z)$ where $h(z)$
and $g(z)$ are both analytic at $z_0$, and $z_0$ is a simple zero of $g(z)$ while $h(z_0)\neq 0$, then the regularized limit is computed as,
\begin{align}\label{gilik}
    \lim_{z\to z_0}^{\times} \frac{h(z)}{g(z)} = \frac{h'(z_0)}{g'(z_0)} - \frac{h(z_0) g''(z_0))}{2(g'(z_0))^2}.
\end{align}
For the case when $g''(z_0) = 0$, this result reduces to a form similar to the L'Hospital's rule, 
\begin{align}\label{hospital}
    \lim_{z\to z_0}^{\times} \frac{h(z)}{g(z)} = \lim_{z\to z_0}\frac{h'(z)}{g'(z)}.
\end{align}
The derivation of this result is given in the proof of Corollary 3.1 in \cite{regularizedlimit}. 

The finite part of the divergent integral \eqref{miv} 
can be extracted from the 
analytic continuation, $\mathcal{M}^*[f(x); s]$, to the whole complex $s$ plane of the Mellin transform, 
\begin{align}\label{mopt}
    \mathcal{M} \left[f(x) ;\,\,s\right] = \int_{0}^{\infty}x^{s-1}f(x)\mathrm{d}x,
\end{align}
provided there is a non-trivial strip of analyticity of the Mellin integral. The case when the upper limit is a finite $a$ follows from equation \eqref{mopt} by considering $f(x) = g(x)\Theta(a-x)$, where $\Theta(x)$ is the Heaviside step function.  Furthermore, if $f(x)$ is analytic at $x=0$, the Mellin transform $\mathcal{M}[f(x); s]$ has at most simple poles along the real line. 

For a positive integer $m$, if $s=1-m$ is a pole of $\mathcal{M}^*[f(x); s]$, then the finite part integral \eqref{miv} is given by the regularized limit of $\mathcal{M}^*[f(x); s]$ at $s=1-m$,
\begin{align}\label{sigaa}
    \bbint{0}{\infty} \frac{f(x)}{x^{m}} \, \mathrm{d}x = \lim^{\times}_{s\to 1-m} \mathcal{M}^*\ \left[f(x) ;\,\,s\right].
\end{align}
In most situations, the analytic continuation can be written in rational form, $\mathcal{M}^*[f(x); s]=h(s)/g(s)$, where $h(1-m)\neq 0$ and $g(1-m)=0$. In this case, it may be possible to choose the rationalization such that $g''(1-m)=0$ so that when $s=1-m$ is a simple pole of $\mathcal{M}^*[f(x); s]=h(s)/g(s)$ the calculation of the regularized limit at $s=1-m$ reduces to equation \eqref{hospital}.

Both the canonical \eqref{finitepart} and regularized limit \eqref{sigaa} representations are used primarily for computing the value of the Hadamard's finite part explicitly. We demonstrate the foregoing discussion by computing the finite part integral \eqref{sigaa} for the case $f(x) = e^{-bx}$ for $b>0$. The finite part integral is the regularized limit at the poles of the analytic continuation of the following Mellin transform integral \cite[p 20]{brychkov2018handbook}, 
\begin{align}
    \mathcal{M} \left[e^{-b x} ;\,\,s\right] &= \int_{0}^{\infty}x^{s-1}e^{-bx}\mathrm{d}x\\\label{cipt}
    &= b^{-s}\Gamma(s), \qquad \mathrm{Re} \,b, \mathrm{Re}\,s > 0.
\end{align}
The analytic continuation of the Mellin transform to the whole complex plane is simply the right-hand side of equation \eqref{cipt},
\begin{align}\label{sif}
    \mathcal{M}^*\left[e^{-b x} ;\,\,s\right] = b^{-s}\Gamma(s).
\end{align}

\begin{figure}
    \centering
    \includegraphics[scale=0.2]{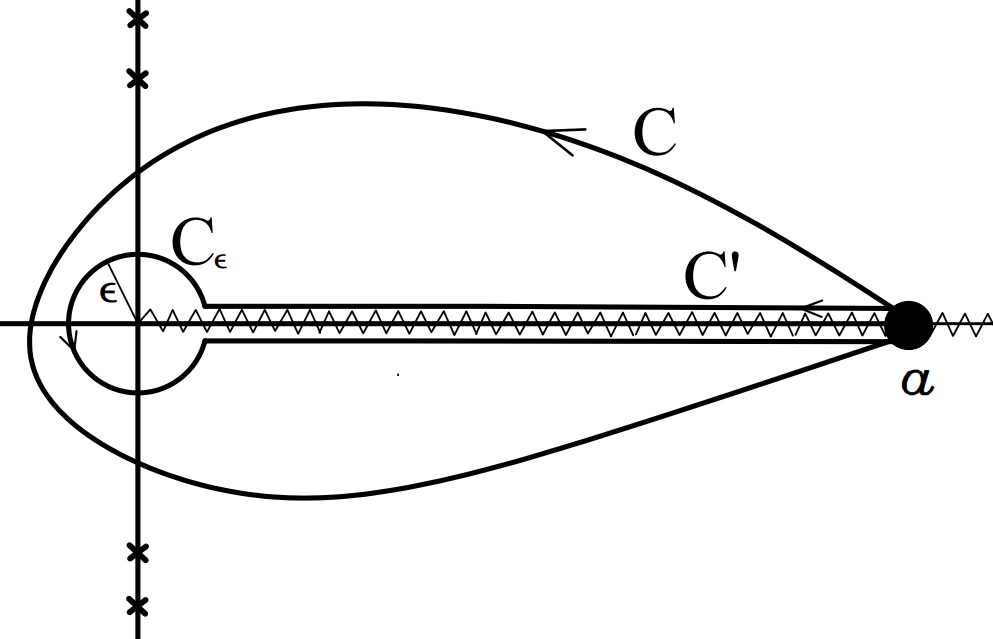}
	\caption{The contour $\mathrm{C}$ used in the representation \eqref{result1} for the Hadamard's finite part. The contour $\mathrm{C}$ excludes any of the poles of $f(z)$. The upper limit $a$ can be infinite. $\epsilon$ is a small positive parameter.}
	\label{tear2}
\end{figure}

Hence, the finite part integral is
\begin{align}\label{miwo}
        \bbint{0}{\infty}\,\frac{e^{-bx}}{x^{m}}\,\mathrm{d}x 
        =\lim^{\times}_{s\to 1-m} b^{-s}\Gamma(s)
        = \lim^{\times}_{s\to 1-m} \frac{\pi b^{-s}}{\sin(\pi s)\Gamma(1-s)}.
\end{align}
where we used the reflection formula $\Gamma(s) = \pi / \sin(\pi s)\Gamma(1-s)$. We then let $g(s) = \sin(\pi s)$ and $h(s) = \pi b^{-s}/ \Gamma(1-s)$ so that from equation \eqref{hospital}, the regularized limit evaluates to
\begin{align}\label{poriy}
    \bbint{0}{\infty}\,\frac{e^{-bx}}{x^{m}}\,\mathrm{d}x = \lim^{\times}_{s\to 1-m} \frac{h(s)}{g(s)} = \lim_{s\to 1-m} \frac{h'(s)}{g'(s)}.
\end{align}
Hence, substituting back $g'(s) = \pi\cos(\pi s)$ and 
\begin{align}
    h'(s) = -\frac{\pi b^{-s}}{\Gamma(1-s)}(\log b-\psi(1-s))
\end{align}
and computing the limit in equation \eqref{poriy}, we obtain for the case when $b$ is real and positive,
\begin{equation}\label{ighi}
    \bbint{0}{\infty}\,\frac{e^{-bx}}{x^{m}}\,\mathrm{d}x = \frac{\left(-1\right)^{m}\,b^{m-1}}{\left(m-1\right)!}\left(\ln\,b-\psi\left(m\right)\right),\qquad m =1,2\dots,
\end{equation}
where $\psi\left(z\right)$ is the digamma function. This is consistent with the result in \cite[eq 3.15]{galapon2} obtained after a lengthy calculation using the canonical definition \eqref{finitepart}.
For the case when $b = -ia$ for any real $a$, a similar calculation yields
\begin{align}\label{hiloy}
    \bbint{0}{\infty}\frac{e^{iax}}{x^{m}}\mathrm{d}x = -\frac{(ia)^{m-1}}{(m-1)!}\left(\ln|a| -\frac{i\pi}{2}\mathrm{sgn}(a)-\psi(m)\right),\qquad m=1,2\dots
\end{align}

In this particular case, the computation of the finite part integral \eqref{ighi} as a regularized limit is more convenient. In general however, especially for cases when the Mellin transform $\mathcal{M}[f(x); s]$ does not exist for some function $f(x)$, the Hadamard's finite part integral can always be obtained from canonical definition \eqref{finitepart} or from its integral representation \eqref{result1}.

\section{ Finite-Part Integration}\label{bigih}
An important application of the contour integral representation \eqref{result1} is a technique known as finite-part integration \cite{galapon2, tica2018finite,tica2019finite,villanueva2021finite} and relies on the equivalence among the three different representations discussed above.  The relevant procedure employed here is a special case of the more general result \cite[eq 64]{galapon3}. We apply it here to give a rigorous derivation of the closed-form for the Heisenberg-Euler Lagrangian from  which exact values can be computed. 

\subsection{The Heisenberg-Euler Lagrangian} The Heisenberg-Euler Lagrangian $\mathcal{L}$ is a nonlinear correction to the Maxwell Lagrangian resulting from the interaction of a vacuum of charged particles of mass $m$ with an external electromagnetic field \cite{schwinger, dunne, dunne_harris,  heisenber_euler, walter, walter2, dittrich, dunne_shw}. In the one-loop order and for the case of constant fields, it depends on the invariant quantity $\beta = e^2 (B^2-E^2)/m^4$, where $e$ is the electron charge. In the case of a purely magnetic background, $E\to0$, it is given for spin-$0$ particles,  
\begin{align}\label{oin}
    \mathcal{L}(\beta) = \frac{m^4}{16\pi^2}f(\beta),
\end{align}
where,
\[
       f(\beta) = \int_{0}^{\infty}\frac{\mathrm{d}\tau}{\tau^3}e^{-\tau}\chi{(\sqrt{\beta}\, \tau)};\qquad\chi(x) = x\,\mathrm{csch}(x) - 1+\frac{x^2}{6}.
\]

This representation is in Heaviside-Lorentz system with natural units so that $h = c =1$ and the fine structure constant reads $\alpha = e^2/4\pi$. For $x\to 0$, the function $\chi(x)$ can be expanded as,
\begin{align}\label{miyp}
    \chi(x) = \sum_{n=2}^{\infty} c_n x^{2n},     \qquad c_n = \frac{2-2^{2n}}{(2n)!}B_{2n},
\end{align}
where $B_{2n}$ are the Bernoulli numbers. Substituting this expansion for $\chi(x)$ into equation \eqref{oin} and integrating term-by-term, we obtain the divergent alternating weak-field expansion for the function $f(\beta)$: 
\begin{equation}\label{gagah}
    f(\beta) = \sum_{n=2}^{\infty} a_n (-\beta)^{n},\qquad a_{n} = (-1)^{n}(2n-3)! c_n, \qquad \beta\to 0.
\end{equation}

For a purely electric background, $B\to 0$, so that $\beta = -\kappa, \kappa = e^2 E^2/m^4$, the Heisenberg-Euler Lagrangian is a complex-valued quantity and admits the following representation,
\begin{align}\label{nitu}
     \mathcal{L}{(\kappa)} = \frac{m^4}{16\pi^2}f(\kappa),\,\,\, f(\kappa) =-\int_{0}^{\infty}\frac{e^{-i \tau}}{\tau^3} \left[\sqrt{\kappa}\,\tau\,\mathrm{csch}\left(\sqrt{\kappa} \tau\right) -1 + \frac{\kappa \tau^2}{6}\right]\mathrm{d}\tau.
\end{align}
The imaginary part signals the important phenomenon of Schwinger effect which predicts the instability of the QED vacuum \cite{dunne_shw}.

The corresponding weak-electric field,  $\kappa\to 0$, perturbation expansion for this representation is identical to \eqref{gagah} but is non-alternating in sign. Since this expansion is real, the imaginary part, which is exponentially vanishing in the weak electric field limit $\kappa\to 0$, is undetectable up to any finite order of this nonalternating weak-field perturbation expansion.

\subsection{Exact Evaluation}

While the integral $f(\beta)$ in the representations \eqref{oin} and \eqref{nitu} are finite and well-defined in both cases, they are expressed as a finite sum of divergent integrals \eqref{miv} with pole singularities at the origin. A formal procedure to arrive at a closed-form is carried out in \cite{ walter, walter2, dittrich} by employing $n$-dimensional regularization scheme to make sense of the divergent integrals. Here, we will make use of the equivalent representations of Hadamard's finite part discussed in section \eqref{sorat} to derive this result more rigorously. 

The initial step is to bring the integration into the complex plane with the use of the contour $\mathrm{C}$ in the representation \eqref{result1} for the Hadamard's finite part,
\begin{align}
    \int_{\mathrm{C}} \frac{g(z)}{z^3}\log z\,\mathrm{d}z,\qquad 
    g(z) = e^{-z} \left(\sqrt{\beta}\,z\, \mathrm{csch} \left(\sqrt{\beta} z\right) - 1 +\frac{\beta z^2}{6}\right),
\end{align}
where the contour $\mathrm{C}$ is given in figure \ref{tear2}. The poles due to $g(z)$ along the imaginary axis are exterior to $\mathrm{C}$ . We deform the contour $\mathrm{C}$ to an equivalent contour $\mathrm{C'}$ so that 
\begin{align}\label{igio}
        \int_{\mathrm{C'}}\frac{g(z)}{z^3}\log z \mathrm{d}z = 2\pi i \int_{\epsilon}^{a} \frac{g(x)}{x^3} \mathrm{d}x + \int_{\mathrm{C}_\epsilon} \frac{g(z)}{z^3} \log z\,\mathrm{d}z,
\end{align}
where $\mathrm{C}_\epsilon$ is the circular contour about the origin. In the limit as $\epsilon\to 0 $, the integral along the circular contour $\mathrm{C_\epsilon}$ vanishes so that equation \eqref{igio} reduces to
\begin{align}\label{nimo}
    \int_{0}^{a} \frac{g(x)}{x^3} \mathrm{d}x = \frac{1}{2\pi i}\int_{\mathrm{C}}\frac{g(z)}{z^3} \log z \mathrm{d}z.
\end{align}

We then add a zero term,
\begin{align}
    -\frac{\pi i}{2\pi i}\int_{\mathrm{C}} \frac{g(z)}{z^3} \mathrm{d}z = -\pi i \sum \mathrm{Res}\,\,\left[\frac{g(z)}{z^3}\right] = 0,
\end{align}
to the right-hand side of equation \eqref{nimo} so that,
\begin{align}
    \int_{0}^{a} \frac{g(x)}{x^3} \mathrm{d}x = \frac{1}{2\pi i}\int_{\mathrm{C}}\frac{g(z)}{z^3}\left(\log z - \pi i \right)\mathrm{d}z
    = \bbint{0}{a}  \frac{g(x)}{x^3}\mathrm{d}x
\end{align}
where we used the contour integral representation \eqref{result1} of the Hadamard's finite part. Hence, taking the limit $a\to\infty$, the integral $f(\beta)$ in the exact representation \eqref{oin} evaluates to
\begin{align}\nonumber
     f(\beta) &= \int_{0}^{\infty}\frac{e^{-\tau}}{\tau^3}\left[\sqrt{\beta}\,\tau \,\mathrm{csch}{\left(\sqrt{\beta}\tau\right)} - 1 + \frac{\beta\tau^2}{6}\right] \mathrm{d}\tau\\\label{gidak}
     &= \sqrt{\beta}\bbint{0}{\infty}\frac{e^{-\tau} \mathrm{csch}{\left(\sqrt{\beta}\tau\right)}}{\tau^2}\mathrm{d}\tau -\bbint{0}{\infty}\frac{e^{-\tau}}{\tau^3}\mathrm{d}\tau + \frac{\beta}{6}\bbint{0}{\infty} \frac{e^{-\tau}}{\tau}\mathrm{d}\tau.
\end{align}

From an expedient point of view, the result \eqref{gidak}, which we obtained rigorously, effectively follows from distributing the integration followed by the immediate regularization of each divergent term as Hadamard's finite part. In section \ref{bigaj}, we will demonstrate that this heuristic will generally result to missing terms especially when one performs a term-by-term integration involving an infinite number of divergent integrals.

The first finite part integral in  right-hand side of equation \eqref{gidak} can be computed using the following Mellin transform integral \cite[p 34, eq 7]{brychkov2018handbook},
\begin{align}\nonumber
    \mathcal{M}\left[e^{-a\tau} \mathrm{csch}{\left(b\,\tau\right)};\,\,s\right] &= \int_{0}^{\infty}\tau^{s-1} e^{-a\tau} \mathrm{csch}{\left(b\,\tau\right)} \mathrm{d}\tau\\\label{igik}
    & = \frac{2^{1-s}}{b^{s}}\Gamma(s)\,\zeta\left(s,\frac{a+b}{2\,b}\right), \qquad \mathrm{Re}\,a > -|\mathrm{Re}\,b|; \mathrm{Re}\,s > 1,
\end{align}
 where $\,\zeta\left(z,\nu\right)$ is the Hurwitz zeta function. The finite part integral is the regularized limit at the simple pole $s=-1$ of the analytic continuation of the Mellin transform to the whole complex plane. The analytic continuation of the Mellin transform is simply given  by the right-hand side of equation \eqref{igik},
 \begin{align}
\mathcal{M}^{\ast}\left[e^{-a\tau} \mathrm{csch}{\left(b\,\tau\right)};\,\,s\right] = \frac{2^{1-s}}{b^{s}}\Gamma(s)\,\zeta\left(s,\frac{a+b}{2\,b}\right).
 \end{align}
 So that
\begin{align}
    \bbint{0}{\infty}\frac{e^{-\tau} \mathrm{csch}{\left(\sqrt{\beta}\tau\right)}}{\tau^2}\mathrm{d}\tau 
    &= \lim_{s\to-1}^{\times} \mathcal{M}^{\ast}\left[e^{-\tau} \mathrm{csch}{\left(\sqrt{\beta}\,\tau\right)};\,\,s\right] \\
    &= \lim_{s\to-1}^{\times} \frac{2^{1-s}}{\beta^{s/2}} \frac{\pi}{\sin(\pi s)\Gamma(1-s)}\,\zeta\left(s,\frac{1+\sqrt{\beta}}{2\sqrt{\beta}}\right).
\end{align}
where we've made use of the reflection formula for $\Gamma(s)$.
We then rationalize the analytic continuation of the Mellin transform by writing,
\begin{align}
\mathcal{M}^{\ast}\left[e^{-\tau} \mathrm{csch}{\left(\sqrt{\beta}\,\tau\right)};\,\,s\right] = \frac{h(s)}{g(s)},
\end{align}
where
\begin{equation}\label{igli}
    h(s) = \frac{2^{1-s}\pi}{\beta^{s/2}\Gamma(1-s)} \zeta\left(s,\frac{1+\sqrt{\beta}}{2\sqrt{\beta}}\right), \qquad g(s) = \sin(\pi s).
\end{equation}
From the formula \eqref{hospital}, we compute the regularized limit as
\begin{align}\label{miytp}
      \bbint{0}{\infty}\frac{e^{-\tau} \mathrm{csch}{\left(\sqrt{\beta}\tau\right)}}{\tau^2}\mathrm{d}\tau = \lim_{s\to-1}^{\times} \frac{h(s)}{g(s)} = \lim_{s\to-1} \frac{h'(s)}{g'(s)}.
\end{align}
Computing the derivatives, $g'(s) = \pi\cos(\pi s)$ and 
\begin{align}\nonumber
    h'(s) = \frac{2^{1-s}\pi\beta^{-s/2}}{\Gamma(1-s)}& \left[\left(\psi(1-s)-\ln\beta-\ln 2\right)\,\zeta\left(s,\frac{1+\sqrt{\beta}}{2\sqrt{\beta}}\right)\right.\\
    + &\left.\,\zeta^{(1,0)}\left(s,\frac{1+\sqrt{\beta}}{2\sqrt{\beta}}\right)  \right],
\end{align}
where $\zeta^{(1,0)}(z,\nu)$ is the derivative of the Hurwitz zeta function with respect to the first argument $z$. Hence, the finite part integral \eqref{miytp} evaluates to 
\begin{align}\nonumber
          \bbint{0}{\infty}\frac{e^{-\tau} \mathrm{csch}{\left(\sqrt{\beta}\tau\right)}}{\tau^2}\mathrm{d}\tau = 2\sqrt{\beta} &\left[\left(\ln\beta+\ln 4+2\gamma-2\right)\,\zeta\left(-1,\frac{1+\sqrt{\beta}}{2\sqrt{\beta}}\right) \right.\\
          &- \left. 2\,\zeta^{(1,0)}\left(-1,\frac{1+\sqrt{\beta}}{2\sqrt{\beta}}\right)\right],
\end{align}
where $\gamma = - \psi(1)$ is the Euler-Mascheroni constant. 

The other finite part integrals in the right-hand side of equation \eqref{gidak} are special cases of equation \eqref{ighi}. Hence, the integral representation \eqref{gidak} for the Heisenberg-Euler Lagrangian in the case of spin-0 particles takes the following closed-form,
\begin{equation}\label{impin}
f(\beta) = \frac{\beta\ln\beta}{12} - \frac{\ln\beta}{4}+\beta\left(\frac{\ln 4}{12}-\frac{1}{6}\right) -\frac{\ln 4}{4}-\frac{1}{4}- 4\beta \zeta^{(1,0)}\left(-1,\frac{1+\sqrt{\beta}}{2\sqrt{\beta}}\right),
\end{equation}
where we made use of the relation \cite{NIST} to simplify the result.
The result \eqref{impin} is consistent with that given in \cite{dunne,walter,walter2,dittrich} using a different approach based on $\zeta$-function regularization. The result given in \cite[eq 3.26]{dittrich} obtained formally using $n$-dimensional regularization has an error and is rectified in \cite{walter}.

Similarly for the case of a purely electric field background, the representation \eqref{nitu} is evaluated as, 
\begin{align}\label{ibiki}
f(\kappa) 
    =-\sqrt{\kappa}\,\bbint{0}{\infty}\frac{e^{-i \tau}\mathrm{csch}(\sqrt{\kappa} \tau)}{\tau^2}\mathrm{d}\tau+\bbint{0}{\infty}\frac{e^{-i \tau}}{\tau^3}\mathrm{d}\tau - \frac{\kappa}{6}\bbint{0}{\infty}\frac{e^{-i \tau}}{\tau} \mathrm{d}\tau.
\end{align}
The first term evaluates to
\begin{align}\nonumber
    \bbint{0}{\infty}\frac{e^{-i \tau}\,\mathrm{csch}(\sqrt{\kappa} \tau)}{\tau^2}\mathrm{d}\tau = -4\sqrt{\kappa}& \left[\left(\psi(2)- \ln \left(2\sqrt{\kappa}\right)\right)\,\zeta\left(-1, \frac{\sqrt{\kappa}+i}{2\sqrt{\kappa}}\right)  \right.\\\label{ciud}
    + &\left.\,\zeta^{(1,0)}\left(-1, \frac{\sqrt{\kappa}+i}{2\sqrt{\kappa}}\right) \right],
\end{align}
while finite part integrals in the right-hand side of equation \eqref{gidak} are special cases of the result \eqref{hiloy}. Hence we obtain the closed form
\begin{align}\nonumber
    f(\kappa) = \frac{\kappa}{6}-\frac{1}{4}-\left(\frac{\kappa}{6}+\frac{1}{2}\right) \ln{\left(2\sqrt{\kappa}\right)} + &i\left(\frac{\pi}{4} + \frac{\kappa\pi}{12}\right) \\\label{wadik}
    &+ 4\kappa\, \zeta^{(1,0)}\left(-1, \frac{\sqrt{\kappa}+i}{2\sqrt{\kappa}}\right).
\end{align}

As suggested in \cite{dunne_harris}, a result numerically consistent with  \eqref{wadik} may be obtained by a suitable analytic continuation, $B\to -iE$, of the corresponding closed-form \eqref{impin} for the case of a magnetic background. This procedure must be carefully implemented by choosing an appropriate branch for the complex logarithmic function and the Hurwitz zeta functions appearing in equation \eqref{impin}. In particular, we find that the branch $\log(iy) = \ln y +i \pi/2 $ is compatible with Mathematica 13.2 's implementation for $\zeta\left(-1, x+iy\right)$ and $\zeta^{(1,0)}\left(-1, x+iy\right)$ for any $x,y>0$, to give a numerical value consistent with that computed from \eqref{wadik}.

\begin{table}
\centering
	\begin{tabular}{ l l l l l l }
		\hline
		$d$   &  $\beta=10^{-2}$ & $\beta=0.1$&  $\beta=0.2$\\ 
		\hline
		1   & $\textcolor{blue}{1.932}143(10^{-6})$ & $\textcolor{blue}{1.8}214(10^{-4})$ & $6.7937(10^{-4})$ \\

		5   & $\textcolor{blue}{1.9323847}854(10^{-6})$ & $\textcolor{blue}{1.83}50(10^{-4})$ &$6.235({10^{-4}})$\\
		9   & $\textcolor{blue}{1.932384796}847(10^{-6})$ & $\textcolor{blue}1.6194(10^{-4})$ 
                &$-4.978(10^2)$ \\
		20  & $\textcolor{blue}{1.9323847969}843(10^{-6})$ & $8.42618(10^{5})$ & $3.636(10^{12})$\\
		50  & $3.3995123(10^4)$ &  &\\
          \hline
		Exact & $ 1.93238479692775525(10^{-6})$ & $ 1.83994677220(10^{-4})$ & $7.0356826048(10^{-4})$ \\
		\hline
  
	\end{tabular}
 
	\caption{Convergence of the partial sums of the perturbative expansion \eqref{gagah} for the integral $f(\beta)$ in equation \eqref{oin} for the Heisenberg-Euler Lagrangian in the case of purely magnetic background. The exact result is computed from the closed-form \eqref{impin} of the integral representation \eqref{oin}.}
 \label{bigak}
\end{table}

\section{Resummation and extrapolation of the weak-field expansion for the Heisenberg-Euler Lagrangian}\label{bigaj}

We now discuss the application of the method of finite-part integration on the resummation of divergent PT expansion and their extrapolation to nonperturbative regions along the real line. In particular, we will piece together a convergent extrapolant that enables us to compute the Heisenberg-Euler Lagrangian in the strong-field limit from the real coefficients of the divergent weak-field expansion.

\subsection{Purely magnetic case}
The divergence of the PT expansion \eqref{gagah} is evident in the leading growth rate of the coefficients, $a_k\sim(2k)!$ as $k\to\infty$ \cite{dunne}. The same is reflected in the results presented in table \ref{bigak} of using partial sums of this expansion to compute $f(\beta)$ for some values of the parameter $\beta$. In addition,
the function $f(\beta)$ possesses the leading-order behavior in the strong-field regime, $\beta\to\infty$ \cite[eq 1.62]{dunne},
\begin{align}\label{mirt}
    f(\beta)\sim\frac{\beta\ln\beta}{12} +\frac{\ln 2}{6}\beta + \dots
\end{align}

On the basis of this information, we sum the divergent PT series \eqref{gagah}
by mapping the first $d+1$ expansion coefficients $a_{n+2}$ to the positive-power moments $\mu_{2n}$ of some positive function $\rho(x)$,
\begin{align}\label{gigh}
    a_{n+2} =\mu_{2n}= \int_{0}^{\infty} x^{2n}\rho(x)\mathrm{d}x, \qquad n = 0,1,\dots, d.
\end{align}
Substituting this to the expansion \eqref{gagah}, the function $f(\beta)$ is summed formally as,
\begin{align}\nonumber
    f(\beta) &= \sum_{n=2}^{\infty} a_n(-\beta)^{n} = \beta^{2}\sum_{n=0}^{\infty} a_{n+2} (-\beta)^{n} = \beta^{2} \sum_{n=0}^{\infty} \int_{0}^{\infty} x^{2n} \rho(x) (-\beta)^{n} \mathrm{d}x\\
    &= \beta^{2}\int_{0}^{\infty}\rho(x) \left(\sum_{n=0}^{\infty} (-\beta x^2)^n\right)\mathrm{d}x = \beta^2 \int_{0}^{\infty} \frac{\rho(x)}{1+\beta x^{2}} \mathrm{d}x = \beta S(\beta).
\end{align}
which is in terms of the Stieltjes integral,
\begin{align}\label{hartoy}
    S(\beta) =  \int_{0}^{\infty} \frac{\rho(x)}{1/\beta + x^{2}} \mathrm{d}x.
\end{align}

The next step is to evaluate $S(\beta)$ in the strong magnetic field regime in a manner that allows one to incorporate the known leading-order strong-field behavior \eqref{mirt}. The relevant expansion for the generalized Stieltjes integral is obtained using the method of finite-part integration. This is given in the following theorem.
\begin{theorem} \label{lemma0}
Let the complex extension, $\rho(z)$, of the real-valued function $\rho(x)$ for real $x$, be entire, then the generalized Stieltjes integral, $S(\beta)$, admits the following exact convergent expansion
        \begin{align}\label{som}
       S(\beta) = \int_{0}^{\infty}\frac{\rho(x)}{1/\beta + x^{2}}\mathrm{d}x = \sum_{n=0}^{\infty} 
        \frac{(-1)^n}{\beta^n} \mu_{-(2n+2)} +\Delta(\beta),
        \end{align}
        where the term $\Delta(\beta)$ is given by
\begin{align}\label{gibad}
    \Delta(\beta) = \frac{\pi \sqrt{\beta}}{4}\left(\rho\left(\frac{i}{\sqrt{\beta}}\right)+\rho\left(\frac{-i}{\sqrt{\beta}}\right)\right)
    +\frac{\sqrt{\beta}\ln\beta}{4 i}\left(\rho\left(\frac{i}{\sqrt{\beta}}\right)-\rho\left(\frac{-i}{\sqrt{\beta}}\right)\right),
\end{align}
and $\mu_{-(2n+2)}$ are the divergent negative-power moments of $\rho(x)$ are interpreted as the Hadamard's finite part integral,
\begin{equation}\label{pigil}
    \mu_{-(2n+2)} = \bbint{0}{\infty}\frac{\rho(x)}{x^{2n + 2}}\mathrm{d}x.
\end{equation}
\end{theorem}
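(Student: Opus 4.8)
The plan is to prove the identity by a single contour‑deformation that plays the contour representation \eqref{result1} of the finite part off against the residues of the Stieltjes kernel. First I would use Lemma~\ref{prop1} together with \eqref{pisik} to write each divergent negative‑power moment as a contour integral, and then inflate the small circle of figure~\ref{tear2} to a fixed radius $R$: since $\rho$ is entire, $\rho(z)z^{-(2n+2)}(\log z-\pi i)$ has its only singularity at $z=0$, so the enlargement is free and
\[
\mu_{-(2n+2)}=\bbint{0}{\infty}\frac{\rho(x)}{x^{2n+2}}\,\mathrm{d}x=\frac{1}{2\pi i}\int_{\mathrm{C}_R}\frac{\rho(z)}{z^{2n+2}}\,(\log z-\pi i)\,\mathrm{d}z .
\]
Choosing $R>1/\sqrt{\beta}$, on $\mathrm{C}_R$ one has $|\,\beta z^2\,|^{-1}\le(\beta R^2)^{-1}<1$, so the geometric series $\sum_{n\ge0}(-1)^n(\beta)^{-n}z^{-2n-2}=(z^2+1/\beta)^{-1}$ converges uniformly on all of $\mathrm{C}_R$ (the rays included, because $|z|\ge R$ there), and the super‑polynomial decay of $\rho$ makes $\rho(z)(\log z-\pi i)(z^2+1/\beta)^{-1}$ absolutely integrable on $\mathrm{C}_R$. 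Hence term‑by‑term integration is legitimate and
\[
\sum_{n=0}^{\infty}\frac{(-1)^n}{\beta^n}\,\mu_{-(2n+2)}=\frac{1}{2\pi i}\int_{\mathrm{C}_R}\frac{\rho(z)}{z^2+1/\beta}\,(\log z-\pi i)\,\mathrm{d}z .
\]

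\textbf{Comparison with $S(\beta)$.} Separately I would record the companion identity $S(\beta)=\frac{1}{2\pi i}\int_{\mathrm{C}}\rho(z)(z^2+1/\beta)^{-1}(\log z-\pi i)\,\mathrm{d}z$ for the \emph{tight} contour $\mathrm{C}$ of figure~\ref{tear2}. This is exactly the branch‑cut computation underlying \eqref{result1}: the kernel $\rho(z)(z^2+1/\beta)^{-1}$ is analytic on a neighbourhood of $[0,\infty)$ and rapidly decaying, the jump of $\log z$ across the positive real axis is $2\pi i$, and $\int_{\mathrm{C}}\rho(z)(z^2+1/\beta)^{-1}\,\mathrm{d}z=0$ because the tight $\mathrm{C}$ does not encircle the poles $z=\pm i/\sqrt{\beta}$. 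Subtracting the two contour integrals, the difference is governed by the residue theorem: deforming $\mathrm{C}$ outward into $\mathrm{C}_R$ sweeps across both simple poles $z=\pm i/\sqrt{\beta}$ (this is where $R>1/\sqrt{\beta}$ is used), so
\[
\sum_{n=0}^{\infty}\frac{(-1)^n}{\beta^n}\,\mu_{-(2n+2)}-S(\beta)=\sum_{\pm}\mathrm{Res}_{\,z=\pm i/\sqrt{\beta}}\,\frac{\rho(z)(\log z-\pi i)}{z^2+1/\beta}.
\]
Finally I would evaluate the two simple residues, being careful with the branch of the logarithm whose cut is the positive axis: $\log(i/\sqrt{\beta})=-\tfrac12\ln\beta+i\pi/2$ and $\log(-i/\sqrt{\beta})=-\tfrac12\ln\beta+i\,3\pi/2$, while the denominator contributes $\pm2i/\sqrt{\beta}$. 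Collecting the $\ln\beta$–proportional and $\pi$–proportional pieces turns the residue sum into exactly $-\Delta(\beta)$ with $\Delta(\beta)$ as in \eqref{gibad}, which rearranges to \eqref{som}.

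\textbf{Anticipated difficulty.} The residue arithmetic is routine; the real work is the analytic bookkeeping around the two limiting operations. One must justify that the finite part admits the inflated‑contour representation on $\mathrm{C}_R$ (trading the $\epsilon\to0$ circle of figure~\ref{tear2} for a fixed circle of radius $R$ without changing the value), and that the tail of the geometric series is controlled \emph{uniformly on the unbounded} $\mathrm{C}_R$ so that summation and integration commute — this is precisely where the hypothesis that $\rho$ be entire and rapidly decaying is indispensable. A secondary point demanding care is fixing all contour orientations consistently so that the residue contribution enters with the sign that produces $\Delta(\beta)$ as in \eqref{gibad} rather than its negative. It is also worth remarking, as a by‑product, that convergence of the series $\sum_{n}(-1)^n\beta^{-n}\mu_{-(2n+2)}$ needs no separate argument: once the identity is established on a fixed $\mathrm{C}_R$ with $\beta R^2>1$, the remainder of the series after $N$ terms equals $\frac{1}{2\pi i}\int_{\mathrm{C}_R}\rho(z)(\log z-\pi i)(z^2+1/\beta)^{-1}(-\beta z^2)^{-N}\,\mathrm{d}z=O\big((\beta R^2)^{-N}\big)$, so the expansion \eqref{som} converges geometrically for every $\beta>0$.
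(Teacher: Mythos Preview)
Your argument is correct and is essentially the paper's own proof run in reverse: the paper starts from $S(\beta)$, represents it on a large keyhole contour enclosing the poles $\pm i/\sqrt{\beta}$ (thereby picking up the residue term), and then expands $(1/\beta+z^{2})^{-1}$ geometrically to recover the finite-part series via Lemma~\ref{prop1}, whereas you start from the finite-part side, sum on the inflated $\mathrm{C}_R$, and compare with the tight-contour representation of $S(\beta)$ to isolate the same residues. Your explicit geometric tail bound on $\mathrm{C}_R$ is a mild improvement over the paper, which defers convergence of the series \eqref{som} to an external reference.
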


\begin{proof}
       Deform the contour $\mathrm{C}$ to $\mathrm{C'}$ as shown in figure \ref{tear} and perform the following contour integration, 
\begin{equation}\label{tint}
\int_{\mathrm{C'}} \frac{\rho(z)}{1/\beta + z^{2}} \log z\,\mathrm{d}z = (2\pi i) \int_{0}^{a}\frac{\rho(x)}{1/\beta+ x^{2}}\mathrm{d}x + (2\pi i)\sum \mathrm{Res}\left[\frac{\rho(z)\log z}{1/\beta+z^{2}}\right]
\end{equation}
where the integral along the circular loop, $\mathrm{C_\epsilon}$, vanishes as $\epsilon\to 0$. This yields an expression for the original integral along the real line,
\begin{align}\label{mit}
    \int_{0}^{a}\frac{\rho(x)}{1/\beta+ x^{2}}\mathrm{d}x = \frac{1}{2\pi i} \int_{\mathrm{C}} \frac{\rho(z)}{1/\beta + z^{2}} \log z\,\mathrm{d}z  - \sum \mathrm{Res}\left[\frac{\rho(z)\log z}{1/\beta+z^{2}}\right].
\end{align}
We then add a zero to the first term of the right-hand side of equation \eqref{mit}, by adding and subtracting the term 
\begin{align}
    \frac{\pi i}{2\pi i}\int_{\mathrm{C}} \frac{\rho(z)}{1/\beta + z^{2}}\mathrm{d}z = \pi i\sum \mathrm{Res}\left[\frac{\rho(z)}{1/\beta+z^{2}}\right],
\end{align}
so that the first term in the right-hand side of equation \eqref{mit} becomes 
\begin{align}\nonumber\label{musa}
     \frac{1}{2\pi i} \int_{\mathrm{C}} \frac{\rho(z)}{1/\beta + z^{2}} \log z\,\mathrm{d}z  
     = \frac{1}{2\pi i} \int_{\mathrm{C}} \frac{\rho(z)}{1/\beta + z^{2}} (\log z- \pi i)\,\mathrm{d}z  \\
     + \pi i\sum \mathrm{Res}\left[\frac{\rho(z)}{1/\beta+z^{2}}\right].
\end{align}

 \begin{figure}
    \centering
    \includegraphics[scale=0.19]{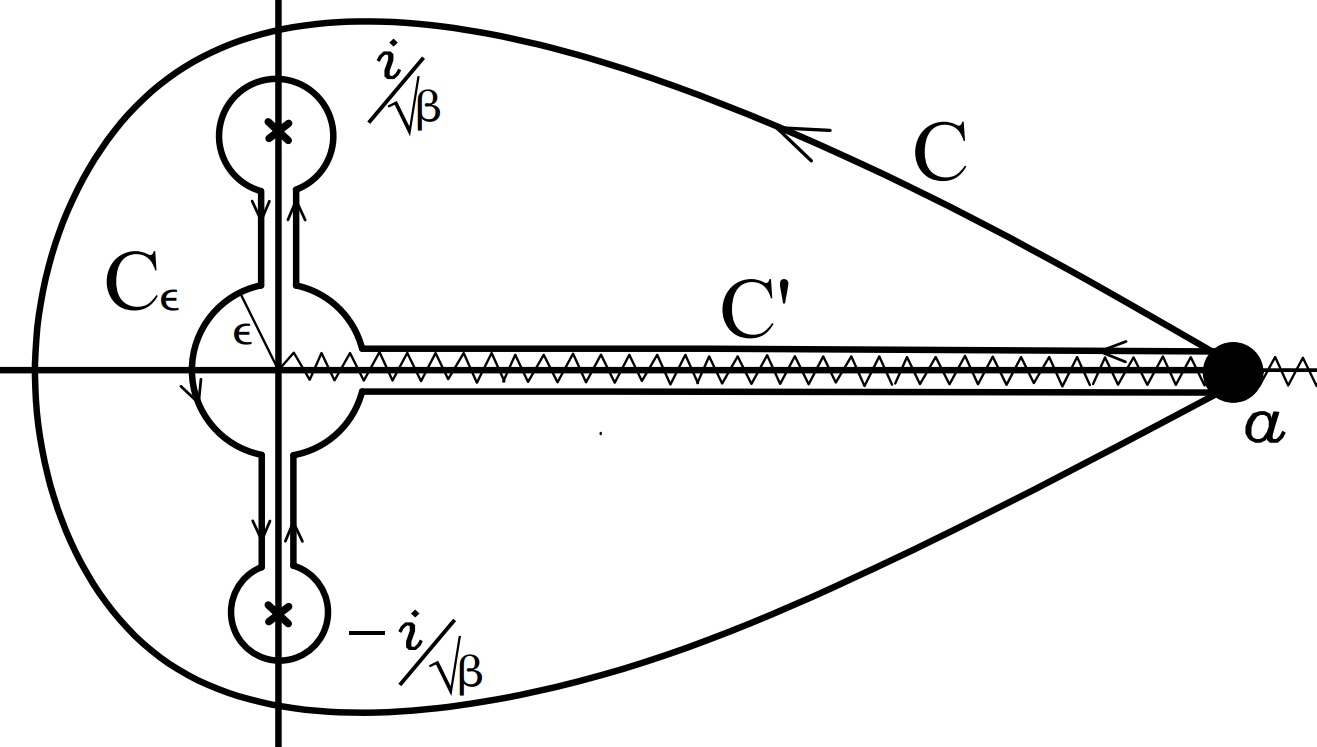}
	\caption{The contour of integration. The upper limit $a$ can be infinite. The poles of the generalized Stieltjes integral at $z=\pm i/\sqrt{\beta}$ lie inside the contour $\mathrm{C}$. $\epsilon$ is a small positive parameter.}
	\label{tear}
\end{figure}
 In the first term of the right-hand of equation \eqref{musa}, we expand $\left(1/\beta + z^{2}\right)^{-1}$ in powers of $1/(\beta z^2)$ and perform a term-by-term integration so that equation \eqref{musa} becomes
\begin{align}\nonumber\label{gabt}
         \frac{1}{2\pi i} \int_{\mathrm{C}} \frac{\rho(z)}{1/\beta + z^{2}} \log z\,\mathrm{d}z =& \sum_{n=0}^{\infty} 
        \frac{(-1)^n}{\beta^n}\frac{1}{2\pi i } \int_{\mathrm{C}}\frac{\rho(z)(\log z-\pi i)}{z^{2n+2}}\mathrm{d}z \\
        &+\pi i\sum \mathrm{Res}\left[\frac{\rho(z)}{1/\beta+z^{2}}\right],\qquad |z| > \frac{1}{\sqrt{\beta}}\\  \label{kinb}
        =&\sum_{n=0}^{\infty} 
        \frac{(-1)^n}{\beta^n} \bbint{0}{a}\frac{\rho(x)}{x^{2n + 2}}\mathrm{d}x + \pi i\sum \mathrm{Res}\left[\frac{\rho(z)}{1/\beta+z^{2}}\right],
\end{align}
where in equation \eqref{kinb}, we used the contour integral representation \eqref{result1} of the Hadamard's finite part integral.   

Substituting equation \eqref{kinb} for the first term of the right-hand side of equation \eqref{mit} and taking the limit as $a\to\infty$,
\begin{align}\label{alad}
\lim _{a\to\infty}\int_{0}^{a}\frac{\rho(x)}{1/\beta + x^{2}}\mathrm{d}x = \sum_{n=0}^{\infty} 
\frac{(-1)^n}{\beta^n} \lim_{a\to\infty}\bbint{0}{a}\frac{\rho(x)}{x^{2n + 2}}\mathrm{d}x +\Delta(\beta).
\end{align}
The term $\Delta(\beta)$ is given by
\begin{align}\nonumber
    \Delta(\beta) = \sum \mathrm{Res}\left[\frac{\rho(z)(\log z - \pi i)}{1/\beta+z^{2}}; z =\pm \frac{i}{\sqrt{\beta}}\right]
\end{align}
which evaluates to equation \eqref{gibad}. Hence, we obtain the result \eqref{som}.

\end{proof}

The expansion in equation \eqref{som} can be shown to be absolutely convergent (the proof is similar to that of Theorem 3.4 in \cite{galapon2}). This convergence and the existence of the limit \eqref{pisik} justifies  interchanging the limit operation with the infinite sum in equation \eqref{alad}. Meanwhile,
the second term $\Delta(\beta)$ is missed by merely performing a formal term-by-term integration followed by a regularization of the divergent integrals by Hadamard's finite part. More importantly, this term provides the dominant behavior for the generalized Stieltjes integral as $\beta\to\infty$. It originates from contributions of the poles $z =\pm i/\sqrt{\beta}$ interior to the contour $\mathrm{C}$ in figure \ref{tear}  which results from the uniformity condition, $|z| > 1/\sqrt{\beta}$, imposed in equation \eqref{gabt}.

By contrast, the finite-part integration \eqref{gidak} of the integral representation for the Heisenberg-Euler Lagrangian involves term-by-term integration over a finite number of divergent integrals so that no uniformity condition is imposed and consequently, the contour $\mathrm{C}$ in figure \ref{tear2} excludes any of the poles of the integrand along the imaginary axis. Hence, we arrive at a real convergent extrapolant for $f(\beta)$ from the result \eqref{som} by finite-part integration,
\begin{equation}\label{hirok}
    f(\beta) = \sum_{k=0}^{\infty} 
\frac{(-1)^k}{\beta^{k-1}} \mu_{-(2k+2)} + \beta \Delta(\beta).
\end{equation}

In order for the expansion \eqref{hirok} to extrapolate the divergent weak-field expansion \eqref{gagah} to the non-perturbative regime, $\beta\to\infty$ along the real line, we incorporate the leading-order behavior \eqref{mirt} through the term $\Delta(\beta)$ given in equation \eqref{gibad}. To this end, we require the reconstruction of the function $\rho(x)$ from the positive-power moments $\mu_{2k}$ in equation \eqref{gigh} to be of the form $\rho(x) = x g(x)$ where $g(0)\neq 0$ and has an entire complex extension $g_s(z)$. This is done by expanding $g(x)$ as a generalized Fourier series expansion in terms of the Laguerre polynomials, $L_m(x)$ \cite{four},
\begin{equation}
    g(x) = \sum_{m=0}^{\infty} c_m\psi_m(x), \,\,\,\, \psi_m(x) = e^{-x/2}L_m(x).
\end{equation}
The basis functions $\psi_m(x)$ obey the orthonormality relation \cite{ortho},
and the Laguerre polynomials are given by \cite{laguerree}, 
\begin{equation}
    L_m(x) = \sum_{k=0}^{m} \frac{\left(-m\right)_k x^{k}}{(k!)^2} = m!\sum_{k=0}^{m}\frac{(-x)^k } {(k!)^2\,(m-k)!} .
\end{equation}
So that the reconstruction of $g(x)$ takes the form,
\begin{equation}\label{mity}
    g(x) = e^{-x/2}\sum_{m=0}^{\infty} c_m m! \sum_{k=0}^{m}\frac{(-x)^k}{(k!)^2 (m-k)!}.
\end{equation}
Hence with the form $\rho(x) = x g(x)$, the second term of the right-hand side of equation \eqref{hirok} can simulate the leading-order behavior \eqref{mirt}. 

The first $d+1$ expansion coefficients, $c_m$, in the reconstruction \eqref{mity} are then computed by imposing the moment condition \eqref{gigh}. This results to a system of linear equations,
\begin{align}\label{sugr}
    a_{n+2} = \sum_{m=0}^{d} c_m P(n,m)
\end{align}
where the matrix $P(n,m)$ is given by
\begin{equation}
    P(n,m) = m! 2^{2n+2}\sum_{k=0}^{m}\frac{(-2)^{k}\,(2n+k+1)!}{(k!)^{2} (m-k)!}.
\end{equation}
We solve this system using the LU factorization method and solver provided by the C++ Eigen 3 library \cite{eigenweb}. We also used arbitrary precision data types from \cite{mpfr} and the C++ Boost Multiprecision libraries to represent the PT coefficients $a_{n+2}$ and perform our computations in arbitrary precision.

\begin{table}
	\begin{tabular}{c lllllll}
		\hline
		Moments & $\beta = 10^{7}$ & $\beta = 10^{12}$ &   $\beta = 10^{13}$ & $\beta = 10^{18}$ \\ 
		\hline

		100  & $\textcolor{blue}{1.07}87(10^{7})$ & $\textcolor{blue}{2.0}424(10^{12})$  & $\textcolor{blue}{2.2}3516(10^{13})$ & $\textcolor{blue}{3.1}991(10^{18})$  \\
  
 	500 & $\textcolor{blue}{1.07}54(10^{7})$ & $\textcolor{blue}{2.03}27(10^{12})$ & $\textcolor{blue}{2.22}416(10^{13})$ & $\textcolor{blue}{3.18}16(10^{18})$  \\ 

		1000 & $\textcolor{blue}{1.076}3(10^{7})$  & $\textcolor{blue}{2.03}47(10^{12})$  & $\textcolor{blue}{2.22}648(10^{13})$  & $\textcolor{blue}{3.18}51(10^{18})$   \\ 

            1500 &  $\textcolor{blue}{1.07}72(10^{7})$  & $\textcolor{blue}{2.036}7(10^{12})$  & $\textcolor{blue}{2.228}60(10^{13})$ & $\textcolor{blue}{3.18}83(10^{18})$\\ 

            2000 & $\textcolor{blue}{1.07}71(10^{7})$ & $\textcolor{blue}{2.036}4(10^{12})$ & $\textcolor{blue}{2.228}33(10^{13})$ & $\textcolor{blue}{3.187}9(10^{18})$  \\
      
            2500 & $\textcolor{blue}{1.0769}4(10^{7})$ & $\textcolor{blue}{2.0361}5(10^{12})$ & $\textcolor{blue}{2.2280}3(10^{13})$ & $\textcolor{blue}{3.1874}5(10^{18})$ \\
        \hline
        $P^{999}_{1000} (\beta)$ & $1.5148(10^{4})$  & $1.5151(10^{9})$ & $1.5151(10^{12})$  & $1.5151(10^{15})$  \\
       
        $P^{49}_{50} (\beta)$ & $1.0723(10^{6})$  & $1.0723(10^{11})$ & $1.0723(10^{12})$  & $1.0723(10^{17})$ \\
        \hline
        $\delta_{499} ( \beta)$ & $8.5224(10^{6})$ & $1.0137(10^{15})$ & $1.0130(10^{17})$ & $1.0129(10^{27})$  \\
        
        $\delta_{100} (\beta)$ & $1.1943(10^{7})$ & $6.1881(10^{16})$ & $6.1880(10^{18})$ & $6.1880(10^{28})$  \\
        \hline
        Exact & $1.07693(10^{7})$ & $2.03613(10^{12})$ & $2.22801(10^{13})$ & $3.18742(10^{18})$ \\
        \hline

	\end{tabular}
 
	\begin{tabular}{c lllllll}

		Moments & $\beta = 1$ & $\beta = 4$ &  $\beta = 10^2$ & $\beta = 10^3$ & $\beta = 10^4$ &  \\ 
		\hline

		100 & $\textcolor{blue}{0.0139}583$ & $\textcolor{blue}{0.149}42$  & $\textcolor{blue}{17}.2803$ & $\textcolor{blue}{32}8.27$ & $\textcolor{blue}{507}1.2$ \\

 	500 & $\textcolor{blue}{0.0139688}5101 $ & $\textcolor{blue}{0.149783}78$ & $\textcolor{blue}{17.35}40$ & $\textcolor{blue}{329}.10$ & $\textcolor{blue}{507}3.5$  \\ 


		1000 & $\textcolor{blue}{0.013968847}5625$ & $\textcolor{blue}{0.149783}54$  & $\textcolor{blue}{17.35}52$ & $\textcolor{blue}{329}.19$ & $\textcolor{blue}{507}5.9$ \\ 

            1500 & $\textcolor{blue}{0.0139688479}565$ & $\textcolor{blue}{0.1497837}32$ & $\textcolor{blue}{17.356}5$  & $\textcolor{blue}{329.2}68$ & $\textcolor{blue}{507}8.1$ \\ 

            2000 & $\textcolor{blue}{0.0139688479}511$ & $\textcolor{blue}{0.14978372}6$ & $\textcolor{blue}{17.356}4$ & $\textcolor{blue}{329.2}60$ & $\textcolor{blue}{5077}.9$\\

        \hline
        $P^{999}_{1000} (\beta)$ & $0.0139688428836$ & $0.149678652$ & $0.63315$ & $13.0395$ & $149.04$ \\
        
         $P^{49}_{50} (\beta)$ & $0.0139668760758$ & $0.147740086$ & $9.88642$ & $106.322$ & $1071.4$ \\
         \hline
        $\delta_{100} ( \beta)$ & $0.0139688479485$ & $0.149783722$ & $17.3563$ & $329.338$ & $4983.8$ \\
        \hline
        Exact & $0.0139688479485$ & $0.149783722$ & $17.3563$ & $329.251$ & $5077.6$ \\
        \hline
	\end{tabular}

	\begin{tabular}{  c l l }
		Moments & $\beta = 0.1 $ & $\beta = 0.2 $    \\ 
		\hline
		50  & \textcolor{blue}{1.83 9}24$(10^{-4})$  &  \textcolor{blue}{7.0}2 337$(10^{-4})$  \\
  
		100 &  \textcolor{blue}{1.83 99}4$(10^{-4})$ & \textcolor{blue}{7.03 5}36$(10^{-4})$  \\
		    
        500 & \textcolor{blue}{1.83 994 677 2}27$(10^{-4})$  & \textcolor{blue}{7.03 568 26}1$(10^{-4})$  \\

        1000 & \textcolor{blue}{1.83 994 677 220 3}47$(10^{-4})$ & \textcolor{blue}{7.03 568 260 4}74$(10^{-4})$    \\ 
        
        1500 & \textcolor{blue}{1.83 994 677 220 367} 084$(10^{-4})$ & \textcolor{blue}{7.03 568 260 484 }22$(10^{-4})$     \\
        
        2000 & \textcolor{blue}{1.83 994 677 220 367 06}4$(10^{-4})$ &  \textcolor{blue}{7.03 568 260 484 1}9$(10^{-4})$    \\
    
        \hline
        $P^{99}_{100} (\beta)$ & 1.83 994 677 220 361 577$(10^{-4})$ & 7.03 568 260 367 885$(10^{-4})$  \\
        \hline
         $\delta_{25} ( \beta)$ & 1.83 994 677 220 367 065$(10^{-4})$  & 7.03 568 260 484 163$(10^{-4})$ \\
        \hline
        Exact & 1.83 994 677 220 367 060$(10^{-4})$  & 7.03 568 260 484 187$(10^{-4})$ \\
        \hline
	\end{tabular}
        
    \caption{Convergence of the extrapolant \eqref{hirok} constructed from the divergent expansion \eqref{gagah} for the integral $f(\beta)$ in the representation \eqref{oin}. $P^{N}_{M} (\beta)$ is a Pade approximant constructed from the divergent expansion \eqref{gagah} at 3000-digit precision using $N+M+1$ of the positive-power moments, $\mu_{2k}$, in equation \eqref{gigh}. We computed $\delta_{n} (\beta)$ from \cite{jen} using $n+1$ moments. The exact result is computed from the closed-form \eqref{impin}. }
	\label{hinglab}
\end{table}

We then substitute the reconstruction $\rho(x) = x g(x)$, where $g(x)$ is given by equation \eqref{mity}, to the expansion \eqref{hirok} so that the first term evaluates to
\begin{align}\label{gipoy}
\sum_{k=0}^{\infty}\frac{(-1)^{k}}{\beta^{k-1}} \mu_{-(2k+2)} = \sum_{k=0}^{\lfloor\frac{d-1}{2}\rfloor} \frac{(-1)^{k}}{\beta^{k-1}} \left(I_k + J_k + L_k \right)
     + \sum_{k=\lfloor\frac{d-1}{2}\rfloor+1}^{\infty} \frac{(-1)^{k}}{\beta^{k-1}} M_k,
\end{align}
where $\lfloor x \rfloor$ is the floor function and the coefficients are given by,
\begin{equation}
    I_k = \sum_{m=0}^{2k}c_m m! \sum_{l=0}^{m}\frac{(-1)^{l}}{(l!)^2 (m-l)!}\bbint{0}{\infty}\frac{e^{-x/2}}{x^{2k+1-l}}\mathrm{d}x,
\end{equation}
\begin{equation}
    J_k = \sum_{m=2k+1}^{d}c_m m! \sum_{l=0}^{2k}\frac{(-1)^{l}}{(l!)^2 (m-l)!}\bbint{0}{\infty}\frac{e^{-x/2}}{x^{2k+1-l}}\mathrm{d}x,
\end{equation}
\begin{equation}
    L_k = \sum_{m=2k+1}^{d} c_m m! \sum_{l=2k+1}^{m}\frac{(-1)^{l}\,(l-2k-1)!\,2^{l-2k}}{(l!)^2 (m-l)!},
\end{equation}
and
\begin{equation}
    M_k = \sum_{m=0}^{d}c_m m! \sum_{l=0}^{m}\frac{(-1)^{l}}{(l!)^2 (m-l)!}\bbint{0}{\infty}\frac{e^{-x/2}}{x^{2k+1-l}}\mathrm{d}x.
\end{equation}
The finite part integrals appearing in theses terms are given by equation \eqref{ighi},
\begin{equation}
    \bbint{0}{\infty}\frac{e^{-x/2}}{x^{2k+1-l}}\mathrm{d}x = \frac{(-1)^{1-l}\left(\frac{1}{2}\right)^{2k-l}}{(2k-l)!}\left(\ln\left(\frac{1}{2}\right)-\psi(2k+1-l)\right).
\end{equation}

The convergence of the expansion \eqref{hirok} across a wide range of magnetic field strengths is summarized in table \ref{hinglab}. The result presented along each row is computed by adding up to $ k = 2d$ terms of the convergent expansion in equation \eqref{gipoy}, where $d+1$ is the number of moments used. As a rule, the working precision in digits at which we carry out the computation equals the number of moments $a_{n+2}$ used as inputs in the system of linear equations \eqref{sugr}. 

In the strong-magnetic field limit, our result reproduces the first few digits of the exact value. In the weak to intermediate field strength, the extrapolant \eqref{hirok} exhibits an excellent agreement with the exact values as well as with the Pad\'e approximant, $P^{N}_{M} (\beta)$, and the nonlinear sequence transformation, $\delta_{n}(\beta) $, from \cite[eq 4]{jen}. $P^{N}_{M} (\beta)$ is constructed from the PT expansion \eqref{gagah} at 3000-digit working precision using $N+M+1$ expansion coefficients while $\delta_{n}(\beta)$ uses $n+1$. Both these latter methods perform well in the weak to intermediate regimes but fail to extrapolate the divergent PT expansion \eqref{gagah} well into the $\beta\to\infty$ regime regardless of how many coefficients are used as inputs in their construction. This can be traced to the inability of these methods to incorporate the precise logarithmic strong-field behaviors \eqref{mirt} of the Heisenberg-Euler Lagrangian. The Pad\'e approximant for instance exhibits an integer power leading-order behavior, $P^{N}_{M}(\beta)\sim\beta^{N-M}$ as $\beta\to\infty$. 

\subsection{Purely electric case}
In the purely electric case, $\beta = -\kappa, \kappa > 0$, so that the perturbation expansion \eqref{gagah} becomes nonalternating,
the moment prescription \eqref{gigh} sums the nonalternating divergent expansion \eqref{gagah} to the formal integral
\begin{align}\label{darn}
    f(\kappa)
    = \kappa^2\,\int_{0}^{\infty} \frac{\rho(x)}{1-\kappa x^{2}} \mathrm{d}x = \kappa H(\kappa).
\end{align}

The integration can be carried out using a suitable deformation of the contour to evade the pole at $z = 1/\sqrt{\kappa}$. Two such prescriptions are shown in  figure \ref{loop} so that as $\epsilon\to 0$, we obtain
\begin{align}\label{ugay}
         H(\kappa)  = \mathrm{PV}\int_{0}^{\infty}\frac{\,\rho(x)}{1/\kappa- x^{2}}\mathrm{d}x \pm i\frac{\pi  \sqrt{\kappa}}{2} \rho\left(\frac{1}{\sqrt{\kappa}}\right),
\end{align}
where the Cauchy principal value integral is given by
\begin{align}\label{pakot}
\mathrm{PV}\int_{0}^{\infty}\frac{\rho(x)}{1/\kappa - x^2}\mathrm{d}x = \lim_{\epsilon\to 0} \left[ \int_{0}^{\frac{1}{\kappa} -\epsilon}\,\frac{\rho(x)}{1/\kappa - x^2}\mathrm{d}x + \int_{\frac{1}{\kappa} + \epsilon}^{\infty}\,\frac{\rho(x)}{1/\kappa - x^2}\mathrm{d}x \right].
\end{align}
The $\pm$ sign in equation \eqref{ugay} corresponds to performing the integration along the path $\mathrm{C}^{\pm}$. Yet another possibility is to deform the contour in a manner that renders the imaginary part in \eqref{ugay} extraneous so that the value of $H(\kappa)$ is real and is given by the principal value integral \eqref{pakot}.

In general, the manner in which this is done is unclear from perturbation theory alone and a separate prescription must be provided to clear this nonperturbative ambiguity. 
In the case of the Heisenberg-Euler Lagrangian, the imaginary part gives the particle-antiparticle pair production rate, $\Gamma_{\text{prod}}(\kappa) = 2\mathrm{Im}\left(f_{s}(\kappa)\right)$, so that the appropriate contour is $\mathrm{C^+}$. Finally, we perform finite-part integration of the principal value integral in equation \eqref{ugay} as shown in the following theorem.
\begin{figure}
    \centering
    \includegraphics[scale=0.25]{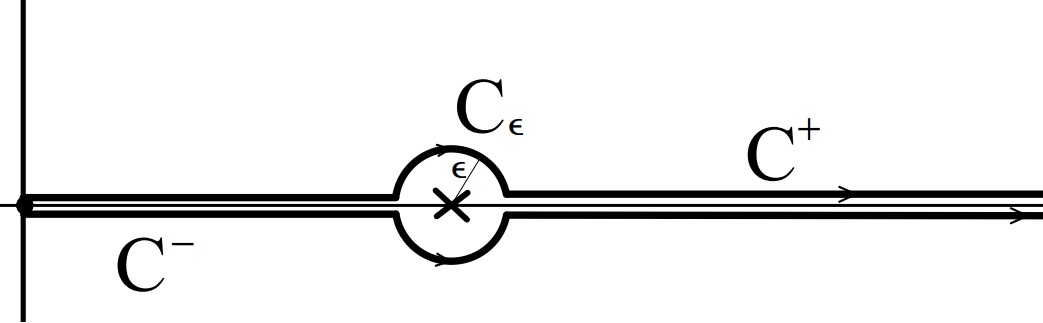}
	\caption{The contours of integration, $\mathrm{C}^{+}$ and $\mathrm{C}^{-}$ used in the prescription for evaluating the integral in \eqref{darn}. They correspond to two directions with which the half-circle $\mathrm{C}_{\epsilon}$  about the pole $z = 1/\sqrt{\kappa}$ is traversed. }
	\label{loop}
\end{figure}
\begin{theorem} \label{lemma3}
Let the complex extension, $\rho(z)$, of the real-valued function $\rho(x)$ along the real line, be entire, then
\begin{align}\label{igar}
       \mathrm{PV}\int_{0}^{\infty}\frac{\rho(x)}{1/\kappa - x^{2}}\mathrm{d}x = -\sum_{n=0}^{\infty} 
        \frac{\mu_{-(2n+2)}}{\kappa^n}  -\Lambda(\kappa),
        \end{align}
        where the term $\Lambda(\kappa)$ is given by
\begin{align}\label{igty}
    \Lambda(\kappa) =  \frac{\sqrt{\kappa}}{2}\ln\left(\sqrt{\kappa}\right)\left(\rho\left(\frac{1}{\sqrt{\kappa}}\right)-\rho\left(-\frac{1}{\sqrt{\kappa}}\right)\right).
\end{align}
and $\mu_{-(2k+2)}$, the divergent negative-power moments of $\rho(x)$, are Hadamard's finite part integrals,
\begin{equation}
    \mu_{-(2n+2)} = \bbint{0}{\infty}\frac{\rho(x)}{x^{2n + 2}}\mathrm{d}x.
\end{equation}
\end{theorem}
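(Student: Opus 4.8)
The plan is to run the finite-part-integration argument from the proof of Theorem~\ref{lemma0} essentially verbatim, the one structural novelty being that of the two simple poles of the integrand $\rho(z)/(1/\kappa - z^{2})$, namely $z = \pm 1/\sqrt{\kappa}$, one now lies \emph{on} the positive real axis --- i.e.\ on the branch cut of $\log z$ and on the original path of integration --- while the other, $z = -1/\sqrt{\kappa}$, lies off it on the negative real axis.

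First I would express $\mathrm{PV}\int_{0}^{a}\rho(x)/(1/\kappa - x^{2})\,\mathrm{d}x$ as a contour integral over the branch-cut-straddling contour $\mathrm{C}$ of Lemma~\ref{prop1}, taken large enough to enclose the pole at $z = -1/\sqrt{\kappa}$ and modified near $z = 1/\sqrt{\kappa}$ by a pair of symmetric semicircular indentations of radius $\epsilon$, one on each lip of the cut. Collapsing $\mathrm{C}$ onto the real axis and using the $2\pi i$ jump of $\log z$ across the cut reproduces, as $\epsilon\to0$: (i) $2\pi i$ times the principal-value integral --- the symmetric indentation is precisely what extracts the principal value from the otherwise divergent real integral; (ii) the full residue contribution of the off-cut pole $z = -1/\sqrt{\kappa}$, just as the imaginary-axis poles entered Theorem~\ref{lemma0}; and (iii) an indentation term at $z = 1/\sqrt{\kappa}$ in which the two semicircles fail to cancel, because $\log z$ differs by $2\pi i$ between the two lips, leaving a multiple of $\ln(1/\sqrt{\kappa}) = -\ln\sqrt{\kappa}$ times $\rho(1/\sqrt{\kappa})$ --- the source of the logarithmic prefactor in $\Lambda(\kappa)$. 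The small circle about the origin drops out at this stage since $\rho(z)/(1/\kappa - z^{2})$ is analytic at $z=0$.

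Next I would add and subtract $\tfrac{\pi i}{2\pi i}\int_{\mathrm{C}}\rho(z)/(1/\kappa - z^{2})\,\mathrm{d}z$ to turn $\log z$ into $\log z - \pi i$, then expand $(1/\kappa - z^{2})^{-1} = -\sum_{n\ge0}\kappa^{-n}z^{-(2n+2)}$ --- valid once the contour is pushed outside the disk $|z|\le 1/\sqrt{\kappa}$, which drags it across both poles $\pm 1/\sqrt{\kappa}$ and hence contributes their residues --- and integrate term by term. By the contour representation~\eqref{result1}, the $n$th term is $-\kappa^{-n}\,\bbint{0}{a}\rho(x)/x^{2n+2}\,\mathrm{d}x$, and letting $a\to\infty$ produces the series $-\sum_{n}\mu_{-(2n+2)}/\kappa^{n}$; the absolute convergence of this series and the legitimacy of exchanging the limit with the sum follow as in Theorem~\ref{lemma0} and \cite{galapon2}. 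Finally I would collect every residue and indentation contribution: the spurious $\pi i$ pieces must cancel among themselves --- as they must, since the left-hand side is manifestly real for real $\rho$ --- and what survives should assemble into exactly $-\Lambda(\kappa) = -\tfrac{\sqrt{\kappa}}{2}\ln\sqrt{\kappa}\,\bigl(\rho(1/\sqrt{\kappa}) - \rho(-1/\sqrt{\kappa})\bigr)$.

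The main obstacle is the step involving the pole on the branch cut: one must track the geometry of the indentation, decide which value of $\log z$ each lip and each half-residue sees, and keep in mind that $z = 1/\sqrt{\kappa}$ plays two roles at once --- it sits on the cut, which forces the principal-value prescription, and it must also be crossed when the contour is enlarged in order to justify the geometric expansion. Doing this book-keeping consistently, and checking that the imaginary terms cancel to leave the real $\Lambda(\kappa)$, is where the real effort lies; everything else is a routine transcription of the proof of Theorem~\ref{lemma0}.
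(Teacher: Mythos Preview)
Your proposal is correct and follows essentially the same route as the paper's proof: a keyhole contour around the cut with indentations at the on-cut pole $z=1/\sqrt{\kappa}$ and a small loop about the off-cut pole $z=-1/\sqrt{\kappa}$, the $\log z \to \log z - \pi i$ shift via adding and subtracting the residue sum, geometric expansion on the outer contour, and identification of each term with a finite-part integral through Lemma~\ref{prop1}. One small caution on bookkeeping: the residues at $\pm 1/\sqrt{\kappa}$ are picked up once, in the collapse/deformation step, and the outer contour $\mathrm{C}$ is already outside the disk $|z|\le 1/\sqrt{\kappa}$ when you expand --- so no additional ``dragging across the poles'' occurs at that stage, and you should make sure your final tally does not double-count them.
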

\begin{proof}
    Deform the contour $\mathrm{C}$ to $\mathrm{C'}$ as shown in figure \ref{contour3} and perform the following contour integration, 
\begin{align}\nonumber
    \int_{\mathrm{C'}}\frac{\rho(z)}{1/\kappa-z^2}\,\log z\mathrm{d}z = 2\pi i\,\mathrm{PV}&\int_{0}^{a}\frac{\rho(x)}{1/\kappa-x^2}\mathrm{d}x +\int_{\mathrm{C}_{\epsilon_1}}\frac{\rho(z)}{1/\kappa -z^2}\log z\mathrm{d}z \\\label{silop}
    & + \int_{\mathrm{C}_{\epsilon_2}}\frac{\rho(z)}{1/\kappa -z^2}\log z\mathrm{d}z,
\end{align}
where the contour integral around the circular loop $\mathrm{C_\epsilon}$ vanishes as $\epsilon\to 0$ and  the first term of the right-hand side of equation \eqref{silop} containing the desired principal value integral is the sum of the integrals along the segments labelled $1,2,5$ and $6$ in the limit as $\epsilon, \epsilon_2\to 0$. The remaining integrals in the right-hand side of equation \eqref{silop} are evaluated as
\begin{align}\nonumber
\lim_{\epsilon_2 \to 0}\int_{\mathrm{C}_{\epsilon_2}}\frac{\rho(z)}{1/\kappa -z^2}\log z\mathrm{d}z
    = - (2\pi i) &\left[\frac{\sqrt{\kappa}}{2}\rho\left(\frac{1}{\sqrt{\kappa}}\right)\ln\left(\frac{1}{\sqrt{\kappa}}\right)\right. \\\label{gihar}
     &+\left. \pi i \mathrm{Res}\left[\frac{\rho(z)}{z^2-1/\kappa};z=\frac{1}{\sqrt{\kappa}}\right]\right].
\end{align}
and 
\begin{align}\nonumber
   \lim_{\epsilon_1 \to 0} \int_{\mathrm{C}_{\epsilon_1}}\frac{\rho(z)}{1/\kappa -z^2}\log z\mathrm{d}z 
    = (2\pi i) &\left[\frac{\sqrt{\kappa}}{2}\rho\left(-\frac{1}{\sqrt{\kappa}}\right)\ln\left(\frac{1}{\sqrt{\kappa}}\right)\right. \\\label{gihat}
    &- \left.\pi i \mathrm{Res}\left[\frac{\rho(z)}{z^2-1/\kappa};z=-\frac{1}{\sqrt{\kappa}}\right]\right].
\end{align}

\begin{figure}
\centering\includegraphics[scale=0.25]{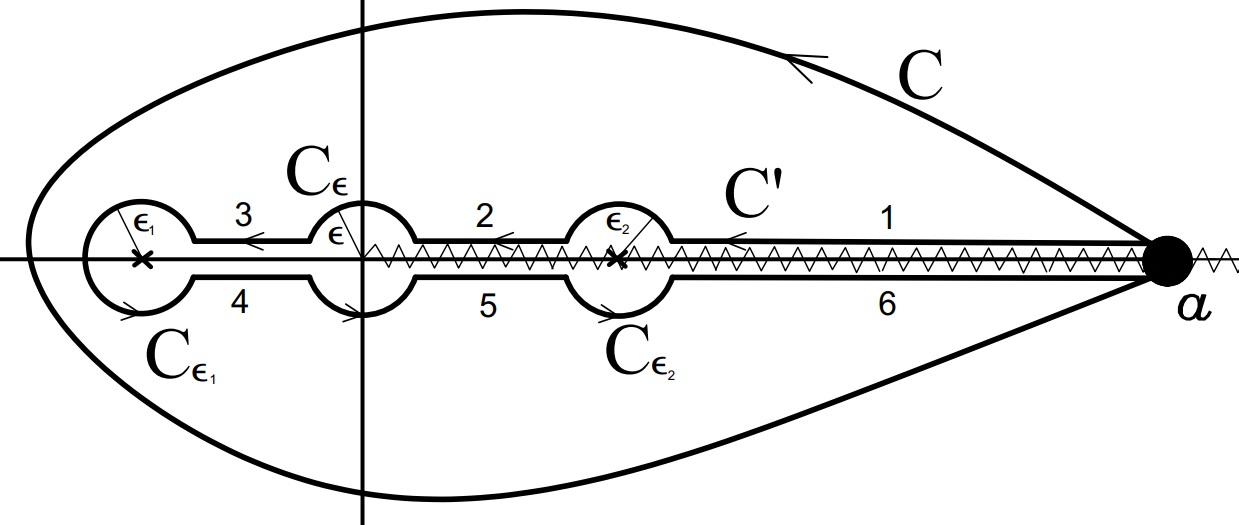}
	\caption{The contour of integration used for the evaluation of the principal value integral \eqref{igar}. Poles located at $z=\pm 1/\sqrt{\kappa}$ are in the interior of the contour $\mathrm{C}$.  }
	\label{contour3}
\end{figure}

Substituting these to equation \eqref{silop} and solving for the principal value integral we obtain,
\begin{align}\nonumber
    \mathrm{PV}\int_{0}^{a}\frac{\rho(x)}{1/\kappa -x^2}\mathrm{d}x = \frac{1}{2\pi i} &\int_{\mathrm{C}}\frac{\rho(z)(\log z -\pi i)}{1/\kappa-z^2}\mathrm{d}z\\\label{smoke}
    &+\frac{\sqrt{\kappa}}{2}\ln\left(\frac{1}{\sqrt{\kappa}}\right)\left(\rho\left(\frac{1}{\sqrt{\kappa}}\right)-\rho\left(-\frac{1}{\sqrt{\kappa}}\right)\right),
\end{align}
where we made use of  
\begin{align}\label{hika}
    \pi i \sum\mathrm{Res}\left[\frac{\rho(z)}{z^2-1/\kappa};z=\pm\frac{1}{\sqrt{\kappa}}\right]= - \frac{\pi i}{2\pi i} \int_{\mathrm{C}}\frac{\rho(z)}{1/\kappa-z^2}\mathrm{d}z.
\end{align}
In the first term in the right-hand side of equation \eqref{smoke}, we expand $(1/\kappa-z^2)^{-1}$ in powers of $1/(\kappa z^2)$ and implement a term-by-term integration,
\begin{align}
     \frac{1}{2\pi i} \int_{\mathrm{C}}\frac{\rho(z)(\log z -\pi i)}{1/\kappa-z^2}\mathrm{d}z =& -\sum_{n=0}^{\infty}\frac{1}{\kappa^n} \frac{1}{2\pi i}\int_{\mathrm{C}}\frac{\rho(z)(\log z -\pi i)}{z^{2n+2}}\mathrm{d}z\\\label{digat}
     = &-\sum_{n=0}^{\infty} 
        \frac{1}{\kappa^n} \bbint{0}{a}\frac{\rho(x)}{x^{2n + 2}}\mathrm{d}x,\,\, |z| > \frac{1}{\sqrt{\kappa}}
\end{align}
where in equation \eqref{digat}, we used the contour integral representation \eqref{result1} of the Hadamard's finite part integral. Substituting equation \eqref{digat} for the first term of the right-hand side of equation \eqref{smoke} and taking the limit as $a\to\infty$,
\begin{align}\nonumber
    \lim_{a\to\infty}\mathrm{PV}\int_{0}^{a}\frac{\rho(x)}{1/\kappa-x^2}\mathrm{d}x = &-\sum_{n=0}^{\infty} 
        \frac{1}{\kappa^n}\lim_{a\to\infty}\bbint{0}{a}\frac{\rho(x)}{x^{2n + 2}}\mathrm{d}x\\\label{lopy}
    &+\frac{\sqrt{\kappa}}{2}\ln\left(\frac{1}{\sqrt{\kappa}}\right)\left(\rho\left(\frac{1}{\sqrt{\kappa}}\right)-\rho\left(-\frac{1}{\sqrt{\kappa}}\right)\right),
\end{align}
This proves the result \eqref{igar}.
\end{proof}

Equivalently, the corresponding extrapolant for the complex Heisenberg-Euler Lagrangian can be obtained by performing an analytic continuation of the real expansion \eqref{hirok} in the purely magnetic case to the negative real line,  $f\left(\beta\to -\kappa\right)$, which is a branch cut of the Stieltjes integral \eqref{hartoy}. The second term, $\Delta(\beta)$, in the expansion \eqref{hirok} then becomes
\begin{align}
\Delta(\beta \to -\kappa) = \pm\frac{i \pi \sqrt{\kappa}}{4}&\left[\rho\left(\frac{1}{\sqrt{\kappa}}\right)+\rho\left(\frac{-1}{\sqrt{\kappa}}\right)\right]
    \\\nonumber
&+\frac{\sqrt{\kappa}\left(\ln\kappa\pm i\pi \right)}{4 }\left[\rho\left(\frac{1}{\sqrt{\kappa}}\right)-\rho\left(\frac{-1}{\sqrt{\kappa}}\right)\right].
\end{align}

The sign ambiguity in this case corresponds to whether the negative real axis is approached from above $ \beta \to e^{i\pi}\kappa$ or below $\beta \to e^{-i\pi}\kappa$. Choosing the latter, we obtain the following complex extrapolant for the divergent expansion \eqref{gagah},
\begin{equation}\label{migyo}
    f(\kappa) = -\sum_{n=0}^{\infty} 
\frac{\mu_{-(2n+2)}}{\kappa^{n-1}} - \kappa \Lambda(\kappa) + \frac{i \pi}{2}\kappa^{3/2}\rho\left(\frac{1}{\sqrt{\kappa}}\right),
\end{equation}
where the term $\Lambda(\kappa)$ encodes the dominant behavior in the strong electric field limit $\kappa\to\infty$
\begin{align}\label{jotro}
    \Lambda(\kappa) =  \frac{\sqrt{\kappa}}{2}\ln\left(\sqrt{\kappa}\right)\left(\rho\left(\frac{1}{\sqrt{\kappa}}\right)-\rho\left(-\frac{1}{\sqrt{\kappa}}\right)\right).
\end{align}

\begin{table}
\centering
  	\begin{tabular}{  c l l l l l   }
   \hline
		Moments & $ \kappa = 1 $ & $\kappa = 4$  \\ 
		\hline
        200 & $\textcolor{blue}{0.0209}01957+i\textcolor{blue}{0.0136}6231$ &  $\textcolor{blue}{0.102}1807 +i\textcolor{blue}{0.25}054011$\\
        500 & $\textcolor{blue}{0.02094}3108+i\textcolor{blue}{0.0136}10668$ & $\textcolor{blue}{0.1022}391 +i\textcolor{blue}{0.2520}3733$\\
        1000 & $\textcolor{blue}{0.02094}3004 +i\textcolor{blue}{0.013609}780$ & $\textcolor{blue}{0.1022}316+i\textcolor{blue}{0.2520}5760$\\
    	1500 & $\textcolor{blue}{0.02094296}2 +i\textcolor{blue}{0.013609}607$  & $\textcolor{blue}{0.10222}66 +i\textcolor{blue}{0.25206}534$ \\
        2000 & $\textcolor{blue}{0.02094296}8 + i\textcolor{blue}{0.013609}603$ & $\textcolor{blue}{0.10222}63 +i\textcolor{blue}{0.252064}87$\\
        \hline
        $P^{999}_{1000} (\kappa)$ & $0.040044268 $  & $-0.296120$  \\
        $P^{49}_{50} (\kappa)$ & $0.034964576$ & $0.2379596$\\
        \hline
        $\delta_{499} (\kappa)$ & $0.012249109$ & $0.2716414$ \\
        
        $\delta_{50} (\kappa)$ & $0.029247263$ & $1.6506276$ \\
        \hline
        
        Exact & $0.020942969 + i 0.013609598$ & $0.1022258 + i0.25206464$ \\
        \hline
	\end{tabular}

	\begin{tabular}{  c l c l l l   }
		Moments & $ \kappa = 0.2$   \\ 
		\hline
        200 & $\textcolor{blue}{9.19}16603636(10^{-4})+i\textcolor{blue}{5.6}19833452(10^{-5})$\\
        500 & $\textcolor{blue}{9.1952}503862(10^{-4}) + i\textcolor{blue}{5.66}2175099(10^{-5})$ \\

        1000 & $\textcolor{blue}{9.19523}67536(10^{-4}) +i\textcolor{blue}{5.6616}82562(10^{-5})$\\
    	1500  &  $\textcolor{blue}{9.1952336}756(10^{-4}) + i\textcolor{blue}{5.661679}405(10^{-5})$     \\
        2000 & $\textcolor{blue}{9.1952336}115(10^{-4}) + i\textcolor{blue}{5.661679}403(10^{-5})$ \\
        \hline
        $P^{999}_{1000} (\kappa)$ & $9.6404911601(10^{-4})$  &  \\
        $P^{49}_{50} (\kappa)$ & $1.0211638800(10^{-3})$\\
        \hline
        $\delta_{499} (\kappa)$ & $1.0414341101(10^{-3})$ & \\
        $\delta_{50} (\kappa)$ & $1.0248313456(10^{-3})$\\
        \hline
        Exact &$9.1952336091(10^{-4}) + i 5.661679704(10^{-5})$  \\
        \hline
	\end{tabular}

    \caption{Convergence of the complex extrapolant \eqref{migyo} constructed from the nonalternating divergent expansion \eqref{gagah} for $f(\kappa)$. The exact result is computed from the closed-form \eqref{wadik}. }
    \label{smolzerospin}
\end{table}

\begin{table}
\centering
   	\begin{tabular}{  c l l l l l   }
        \hline
	Moments & $ \kappa = 10^{8} $ & $\kappa = 10^{12}$  \\ 
		\hline
        200 & $-\textcolor{blue}{1}.30356(10^{8})+i\textcolor{blue}{2}.70526(10^{7}) $ & $-\textcolor{blue}{2.0}969(10^{12})+i\textcolor{blue}{2}.70604(10^{11}) $  \\

        500 & $-\textcolor{blue}{1.26}671(10^{8})+i\textcolor{blue}{2.61}196(10^{7})$ & $-\textcolor{blue}{2.03}27(10^{12})+i\textcolor{blue}{2.61}264(10^{11})$  \\

        1000 & $-\textcolor{blue}{1.26}784(10^{8})+i\textcolor{blue}{2.61}519(10^{7})$ & $-\textcolor{blue}{2.03}48(10^{12})+i\textcolor{blue}{2.61}588(10^{11})$ \\
        
    	1500 & $-\textcolor{blue}{1.268}88(10^{8})+i\textcolor{blue}{2.61}814(10^{7})$  & $-\textcolor{blue}{2.036}7(10^{12})+i\textcolor{blue}{2.61}883(10^{11})$  \\
     
        2000 & $-\textcolor{blue}{1.268}75(10^{8})+i\textcolor{blue}{2.617}76(10^{7})$ & $-\textcolor{blue}{2.036}4(10^{12})+i\textcolor{blue}{2.61}845(10^{11})$  \\
        \hline
        
        $P^{999}_{1000} (\kappa)$ & $-1.5151(10^{6})$  & $-1.5151(10^{11})$  \\
        $P^{49}_{50} (\kappa)$ & $-1.0723(10^{7})$ & $-1.0723(10^{11})$\\
        \hline
        $\delta_{499} (\kappa)$ & $-7.4396(10^{7})$  & $1.012(10^{15})$\\
        $\delta_{50} (\kappa)$ & $-5.3991(10^{8})$ & $-4.93990(10^{16})$ \\
        \hline
        Exact & $-1.26859(10^{8})+i2.61730(10^{7})$ & $-2.0361(10^{12})+i2.61799(10^{11})$ \\
	\end{tabular}

   	\begin{tabular}{  c l l l l l   }
        \hline

		Moments & $ \kappa = 10 $ & $\kappa = 100$ &  $\kappa=10^{3}$ \\ 
		\hline
        200 & $0.046028+i\textcolor{blue}{1.08}0516 $ & $-\textcolor{blue}{12}.8842 +i\textcolor{blue}{}20.29965$ & $-\textcolor{blue}{31}4.188+i\textcolor{blue}{24}7.16$ \\
        500 & $\textcolor{blue}{0.038}571+i\textcolor{blue}{1.084}611$ & $-\textcolor{blue}{1}3.0081+i\textcolor{blue}{19.9}8976 $ & $-\textcolor{blue}{311.02}1 +i\textcolor{blue}{240}.37$\\
        1000 & $\textcolor{blue}{0.0384}62+i\textcolor{blue}{1.084}547$ & $-\textcolor{blue}{1}3.0022 + i\textcolor{blue}{19.99}103$ & $-\textcolor{blue}{311.0}18+i\textcolor{blue}{240}.53$\\
        1500 & $\textcolor{blue}{0.0384}08 +i\textcolor{blue}{1.08449}4 $ & $-\textcolor{blue}{12.99}73+i\textcolor{blue}{19.992}68$ &   $-\textcolor{blue}{311.02}1+i\textcolor{blue}{240.6}9$   \\
        2000 & $\textcolor{blue}{0.03841}4 +i\textcolor{blue}{1.08449}3$ & $-\textcolor{blue}{12.99}77+i\textcolor{blue}{19.992}72$ & $-\textcolor{blue}{311.02}4 +i\textcolor{blue}{240.6}7 $\\
        \hline
        $P^{999}_{1000} (\kappa)$ & $0.948117 $  & $-18.2559$ & $-154.067$ & \\
        $P^{49}_{50} (\kappa)$ & $-13.0793$ & $-11.7271$ & $-108.15$\\
        \hline
        $\delta_{499} (\kappa)$ & $-0.59973$ & $-11.7680 $ & $7542.32$\\
        $\delta_{50} (\kappa)$ & $-4.67879$& $207.549$ & $-9595.11$\\
        \hline
        Exact & $0.038419 + i1.084491$ & $-12.9983  + i19.99284$ & $-311.028 +i 240.65$ \\
        \hline
	\end{tabular}

    \caption{Convergence of the extrapolant \eqref{migyo} for $f(\kappa)$ in the  Heisenberg-Euler Lagrangian in the strong electric field regimes. The exact result is computed from the closed-form \eqref{wadik}.}
    \label{bigzerospin}
\end{table}

The convergence of the complex extrapolant \eqref{migyo} across a wide range of electric field strengths is summarized in tables \ref{smolzerospin} and \ref{bigzerospin}. We also included the results of the Pad\'e approximants $P^N_{M}(\kappa)$ and the non-linear sequence transformation $\delta_n(\kappa)$ from \cite[eq 4]{jen}. Regardless of the orders at which these approximants are constructed, both become unreliable at summing and extrapolating the non-alternating weak-field expansions for the Heisenberg-Euler Lagrangian beyond perturbations of order $\kappa = 10^{-1}$.  More importantly, on their own, they do not offer any means by which the non-perturbative imaginary parts of the Heisenberg-Euler Lagrangians can be recovered from a finite collection of the real coefficients of the corresponding divergent expansion.

\section{Conclusion}\label{conclusion}\label{summ}
In this paper, we proposed a prescription based on the method of finite-part integration for the resummation and extrapolation of divergent PT series expansions with coefficients that we map to the positive-power moments, $\mu_{2n} =\int_{0}^{\infty}x^{2n}\rho(x)\mathrm{d}x$, of some positive function $\rho(x)$. We applied the procedure on the divergent  weak-field expansions of the exact integral representations for the Heisenberg-Euler Lagrangian for both purely magnetic and purely electric background. In each of these examples, the procedure allowed us to transform the weak-field expansion into a novel convergent expansion in inverse powers of perturbation parameter plus a correction term that led us to incorporate the known logarithmic leading-order behavior in the strong-field regime. The prescription also recovered the non-perturbative imaginary parts from the real coefficients of the divergent expansions. This enabled us to construct extrapolants which can be used across a wide range of values for the field strength with considerable accuracy. Furthermore, we also showed how the method of finite-part integration can be used to evaluate in closed-form the exact integral representations for both the real and complex Heisenberg-Euler Lagrangians. 

An important feature to note about the dominant term $\Delta(\beta)$ in equation \eqref{gibad} as well as 
the second and third terms in the extrapolants \eqref{migyo} which simulate the leading behaviour \eqref{mirt} is that they contain the sampling of the reconstruction $\rho(x)$ near the origin as $\beta,\kappa\to \infty$. In this regard, the resummation prescription might be improved by using a more parsimonious prescription for solving the underlying Stieltjes moment problem with a better point-wise convergence near the origin. A promising alternative proposed in \cite{mead1984maximum} is an information-theoretic approach and is based on the maximization of the Shannon entropy functional.  This could enhance the applicability of the resummation procedure on problems with only a few accessible perturbative coefficients. 

\section*{Acknowledgments}
We acknowledge the Computing and Archiving Research Environment (COARE) of the Department of Science and Technology's Advanced Science and Technology Institute (DOST-ASTI) for providing access to their High-Performance Computing (HPC) facility. This work is funded by the University of the Philippines System through the Enhanced Creative Work Research Grant (ECWRG 2019-05-R). C.D. Tica acknowledges the Department of Science and Technology-Science Education Institute (DOST-SEI) for the scholarship grant under DOST ASTHRDP-NSC.

\end{document}